\newcommand{\url}{\text}}
\newtheorem{defn}{Definition}
\newtheorem{theorem}{Theorem}
\newtheorem{proposition}{Proposition}
\newtheorem{lemma}{Lemma}
\newtheorem{remark}{Remark}
\newtheorem*{P1*}{P. 1. (affine-invariance)}
\newtheorem*{P2*}{P. 2. (maximality at the center)}
\newtheorem*{P3*}{P. 3.  (monotonicity relative to the deepest point)}
\newtheorem*{P4*}{P. 4. (vanishing at infinity)}
\newtheorem*{P5*}{P. 5.  (continuous as a function of $x$ )}
\newtheorem*{P6*}{P. 6.  (continuous as a functional of $P$ )}
\begin{document}
	
	\begin{center}
		{\Large \textbf{An integrated local depth}}

		{Lucas Fernandez-Piana$^{\ast}$ and Marcela Svarc$^{\ast\ast}${\footnote{Corresponding author:
					Marcela Svarc, Departamento de Matem\'atica y Ciencias, Universidad de San Andr\'{e}s, Vito Dumas 248, Victoria, Argentina. Email: msvarc@udesa.edu.ar}}}

		\noindent\emph{$^{\ast}$Instituto de C\'alculo, FCEyN, Universidad
			de Buenos Aires and CONICET, Argentina.}
		\\[0pt]\emph{$^{\ast\ast}$Departamento de Matem\'atica y Ciencias,
			Universidad de San Andr\'es and CONICET, Argentina}

	\end{center}

\begin{abstract}
	We  introduce the Integrated Dual Local Depth which is a local depth measure for data in a Banach space based on the use of one-dimensional projections. The  properties of a depth measure are analyzed under this setting and a proper definition of local symmetry is given.  Moreover, strong consistency results for the local depth and also for the local depth regions are attained. Finally, applications to descriptive data analysis and classification are analyzed, making the special focus  on multivariate functional data, where we obtain very promising results.
	
\end{abstract}

\noindent%
{\it Keywords:} 	classification, data depth, multivariate functional data, projection procedures.

\section{Introduction}

Data depth measures play an important role when analyzing complex data sets, such as functional or  high dimensional data. The main goal of depth measures is to provide a center-outer ordering of the data, generalizing the concept of median. Depth measures are also useful for describing different features of the underlying distribution of the data. Moreover, depth measures are powerful tools to deal with several inference problems such as, location and symmetry tests, classification, outlier detection, etc.

Nonetheless, since one of their major characteristics is that the depth values decrease along any half-line ray from the center, they are not suitable for capturing characteristics of the distribution  when data is multimodal. Hence, over the last few years, there have been introduced several definitions of local depth,  with the aim of revealing the local features of the underlying distribution. The basic idea is to restrict a global depth measure to a neighborhood of each point of the space. In this way, a local depth measure should behave as a global depth measure with respect to the  neighborhoods of the different points. It can be considered that the $h$-mode (see Cuevas et al. \cite{CFF07}), which is a depth measure defined for data in an arbitrary normed space, is the first precedent of a formal definition of local depth. This notion of depth has a ``local'' meaning since it is defined via a kernel function and recalls, in a way, the idea of defining a depth in $\mathbb{R}^d,$ in terms of the density function. Since in infinite dimensional spaces there is no natural concept of density, this definition can serve as a substitute. The definition strongly depends on the choice of a bandwidth, which remains fixed throughout the data. This definition has attract the attention in the last years, the consistency has been proved by Nagy \cite{N15}  and the classical properties for depth measures have been deeply studied by Gijbels and Nagy \cite{GN17}. Agostinelli and Romanazzi \cite{AR11}   gave the first formal definition of local depth for the case of multivariate data. They extended the concepts of simplicial and half-space depth so as to allow  recording the local space geometry near a given point. For simplicial depth, they consider only random simplices with sizes no greater than a certain threshold, while for half-space depth, the half-spaces are replaced by infinite slabs with finite width. Both definitions strongly rely on a tuning parameter, which retains a constant size neighborhood of every point of the space, something which plays an analogous role to that of  bandwidth in the problem of density estimation. Desirable statistical theoretical properties are attained for the case of univariate absolutely continuous  distributions. Paindaveine and Van Bever \cite{PB13} introduce a general procedure for multivariate data that allows converting any global depth into a local depth. The main idea of their definition is to study local environments.  This means  regarding the local depth as a global depth restricted to some neighborhood of the point of interest. They obtain strong consistency results of  the sample version with its population counterpart. Following the same line as in the case of the $h$-mode Chen et al. \cite{CDPB09}  and Sguera et al. \cite{SGL214} introduce the kernelized  spatial depth for the multivariate and functional cases respectively.
 More recently, for the case of functional data, Agostinelli \cite{A18}  gives a definition of local depth extending the ideas introduced by Lopez-Pintado and Romo \cite{LR11}  of a half-region space. This definition is also suitable for finite large dimensional datasets. Asymptotic results are obtained.
The proposals given by Agostinelli and Romanazzi \cite{AR11}, Paindaveine and Van Bever \cite{PB13} and Agostinelli \cite{A18} explicitly provide a continuum between definitions of  local and global depth.
All the authors highlight the usefulness of local depth to well known statistical problems such as: classification (Sguera et al. \cite{SGL214}, Paindaveine and Van Bever \cite{PB13}), outlier detection (Chen et al. \cite{CDPB09} and Sguera et al. \cite{SGL15}), clustering (Agostinelli \cite{A18}), regression depth (Paindaveine and Van Bever \cite{PB13}), among others.

Our goal is to give a general definition of local depth for random elements in a Banach space, extending the definition of global depth given by Cuevas and Fraiman \cite{CF09}, where they introduce the Integrated Dual Depth (IDD). The main idea  of IDD is based on combining one-dimensional projections and the notion of one-dimensional depth. Let $(\Omega, A, \mathbb{P})$ be a probability space and $\mathbb{E}$ a separable Banach space.  Denote by $\mathbb{E}'$ the dual space. Let $X:\Omega\longrightarrow \mathbb{E}$ be a random element in $\mathbb{E}$ with distribution $P$ and $Q$ a probability measure in $\mathbb{E}'$ independent of $P.$  The IDD is defined as,
\begin{equation}\label{IDD}
IDD(x,P) = \int D(f(x),P_f) dQ(f),
\end{equation}
where $D$ is an  univariate  depth (for instance,  simplicial or Tukey depth), $f \in \mathbb{E}',$ $x \in \mathbb{E}$ and $P_f$ is the univariate distribution of $f(X).$

 In the present paper, we define the Integrated Dual Local Depth (IDLD). The main idea is to replace the global depth measure in Equation (\ref{IDD}) with a local one-dimensional depth measure following the definition given in Paindaveine and Van Bever \cite{PB13}. We study how the classical properties, introduced by Zuo and Serfling \cite{ZS00},  should be analyzed within the framework of local depth.  We prove, under mild regularity conditions, that our proposal enjoys those properties. Moreover,  uniform strong consistency  results are exhibited for the definition of the empirical local depth of the population counterpart, and also for the local depth regions. The main advantages of our proposals are its flexibility in dealing with general data and also its low computational cost, which enables it to work with high-dimensional data.
We apply local depth to classification problems, with special emphasis on the application of multivariate functional data.

The remainder of the paper is organized as follows. In Section 2 we define the integrated dual local depth and study its basic properties. Section 3 is devoted to the asymptotic study of the proposed local depth measure.  In Section 4 the local depth regions are defined and the consistency results are exhibited. A simulation study and real data examples are given in Section 5. Some concluding remarks are given in Section 6. All the proofs appear in the Appendix.

\section{General Framework and Definitions}\label{RandomLocal}

In this section, we first review the concept of local depth for the univariate case. Then we  define  the Integrated Dual Local Depth, and we finally show that, under mild regularity assumptions, our proposal has good theoretical properties that correspond to those established in Paindaveine and Van Bever \cite{PB13}.
Let $P^{1}$ be a probability measure on $\mathbb{R}$ and $z\in{\mathbb{R}}.$ Let $LD(z,P^{1})$ be a local depth measure of $z$ with respect to $P^{1}$, for example, the univariate simplicial depth, that is
\begin{equation}
LD_S^{\beta}(z,P^{1})=\frac{2}{\beta^2}\left(F(z+\lambda^{\beta}_z)-F(z) \right) \left(F(z)-F(z-\lambda^{\beta}_z)\right),
\label{profsimplocalunidim}
\end{equation}
 where $F$ is the cumulative distribution function of $P^{1}$ and $\lambda^{\beta}_z$ is the neighborhood width defined as follows.
 \begin{defn}\label{localityparamdef}
 Let $F$ be a univariate cumulative distribution function and $z \in \mathbb{R}.$ Then, for $\beta \in (0,1],$ we define the neighborhood width $\lambda^{\beta}_z$ by
 \begin{equation}
 \lambda^{\beta}_z=\inf{ \left\{\lambda>0 : F(z+\lambda)-F(z-\lambda) \geq \beta \right\}},
 \label{localityparam}
 \end{equation}
 where $\beta$ is the locality level.
 \end{defn}
 \begin{remark}
If $F$ is absolutely continuous, the infimum in Equation (\ref{localityparam}) is attained and hence,
  $
 \lambda^{\beta}_z=\min{ \left\{ \lambda>0 : F(z+\lambda)-F(z-\lambda) \geq \beta\right\} }.
 $
 Even more, it is clear that if $\beta_1 < \beta_2,$ then $\lambda^{\beta_1}_z < \lambda^{\beta_2}_z.$
 \end{remark}
The locality level $\beta$ is a tuning parameter that determines the centralness of the point $z$ of the space conditional to a given window around $z.$ If the value is high it approaches the regular value of the point depth whereas if it is low it will only describe the centralness in a small neighborhood of $z$. As $\beta$ tends to one, the local depth measure tends to the global depth measure.

We can also define, in an analogous way, the Tukey univariate local depth,
 \begin{equation*}
LD_H^{\beta}(z,P^{1})=\frac{1}{\beta}min{ \left\{F(z+\lambda^{\beta}_z)-F(z),F(z)-F(z-\lambda^{\beta}_z)\right\} }.
\end{equation*}

In what follows, without loss of generality, we restrict our attention to the case of simplicial local depth, $LD_S^{\beta}$.

\subsection{Integrated Dual Local Depth} \label{IDLDsection}

Our aim in this section is to extend the IDD introduced by \cite{CF09}, to the  local setting. The IDD is a depth measure defined for random elements in a general Banach space. The  idea is to project the data according to random directions and compute the univariate depth measure of the projected unidimensional data. To obtain a global depth measure, these univariate depths measures are integrated. Under mild regularity conditions, the IDD satisfies the basic properties of depth measures described by Zuo and Serfling \cite{ZS00}, and it is strongly consistent. In addition, it is important to highlight its low computational cost, even in high dimensions, since it is based on the repeated calculation of one-dimensional projections (the Appendix includes a numerical analysis to support this claim).

Let $\Omega$ be a probability space and $\mathbb{E}$ a separable Banach space, with $\mathbb{E}'$ its  dual space. Let $X:\Omega\longrightarrow \mathbb{E}$ be a random element in $\mathbb{E}$ with distribution $P,$  $Q$ a probability measure in $\mathbb{E}'$ independent of $P$, $\beta \in (0,1],$ and $x \in \mathbb{E}$. We define the Integrated Dual Local Depth (IDLD),
\begin{equation}\label{IDLD}
IDLD^{\beta}(x,P) = \int LD_{S}^{\beta}(f(x),P_f) dQ(f),
\end{equation}
where $LD_{S}^{\beta}$ is the univariate local depth given in Equation (\ref{profsimplocalunidim}), $f \in \mathbb{E}',$ $x \in \mathbb{E}$ and $P_f$ is the univariate distribution of $f(X).$ As suggested by \cite{CF09}, in the infinite dimensional setting $Q$ may be chosen to be a non-degenerate Gaussian measure and in the multivariate setting as a uniform distribution in the unitary sphere. The fact that $Q$ is independent of $P$ allows us to proceed to the univariate case when projecting $X$, where it is feasible and easy to calculate the local depth.
With a slight abuse of notation, we write $F_f = F_{f(X)}$ for the cumulative distribution function of $f(X).$ Specifically, it reduces to $F_{f(X)}(t) = P_{f(X)} \left( (- \infty,t] \right) = P(f(X) \leq t).$

From now, every time we mention that $X$ is a continuous random element we mean that for every non-null functional on $\mathbb{E}',$ $f(X)$ is an absolutely continuous random variable.
It is clear that the IDLD is well-defined, since it is bounded by $\frac{1}{2}$ and non-negative.

Local depth measure aims to highlight data local features that play an important role in statistical analysis. We consider the well-known growth data set, which is publicly available in the FDA R package. It consists of the height measurements of 54 girls and 39 boys, who were measured 31 times from 1 to 18 years. The original data was smoothed by monotonic cubic regression spline. The growth velocity has different patterns in boys and girls. At puberty, the growth rate has a local maximum. This typically occurs between 12 and 14 years for girls, a couple of years earlier than for boys, see Figure \ref{centralesgrowth}.

IDLD  is sensitive to the choice of the locality parameter, $\beta$, its choice will be addressed in detail later in this work. For different locality levels, the features that stand out from each dataset vary. In our example, for locality levels lower than $0.35,$ within the $5\%$ locally deepest curves  two belong to boys and three to girls, maintaining the proportion in which they appear in the dataset. While for locality levels greater than $0.6,$ the $5\%$ deepest curves correspond to boys, as exhibited in Figure \ref{centralesgrowth}. For locality levels between $0.40$ and $0.55$ typically only one of the  $5\%$ locally deepest curves corresponds to boys. Enlarging the proportion of deepest curves (up to $20\%$) this pattern is still observed. 

\begin{figure}[!t]
\centering
 \includegraphics[width=0.6 \textwidth]{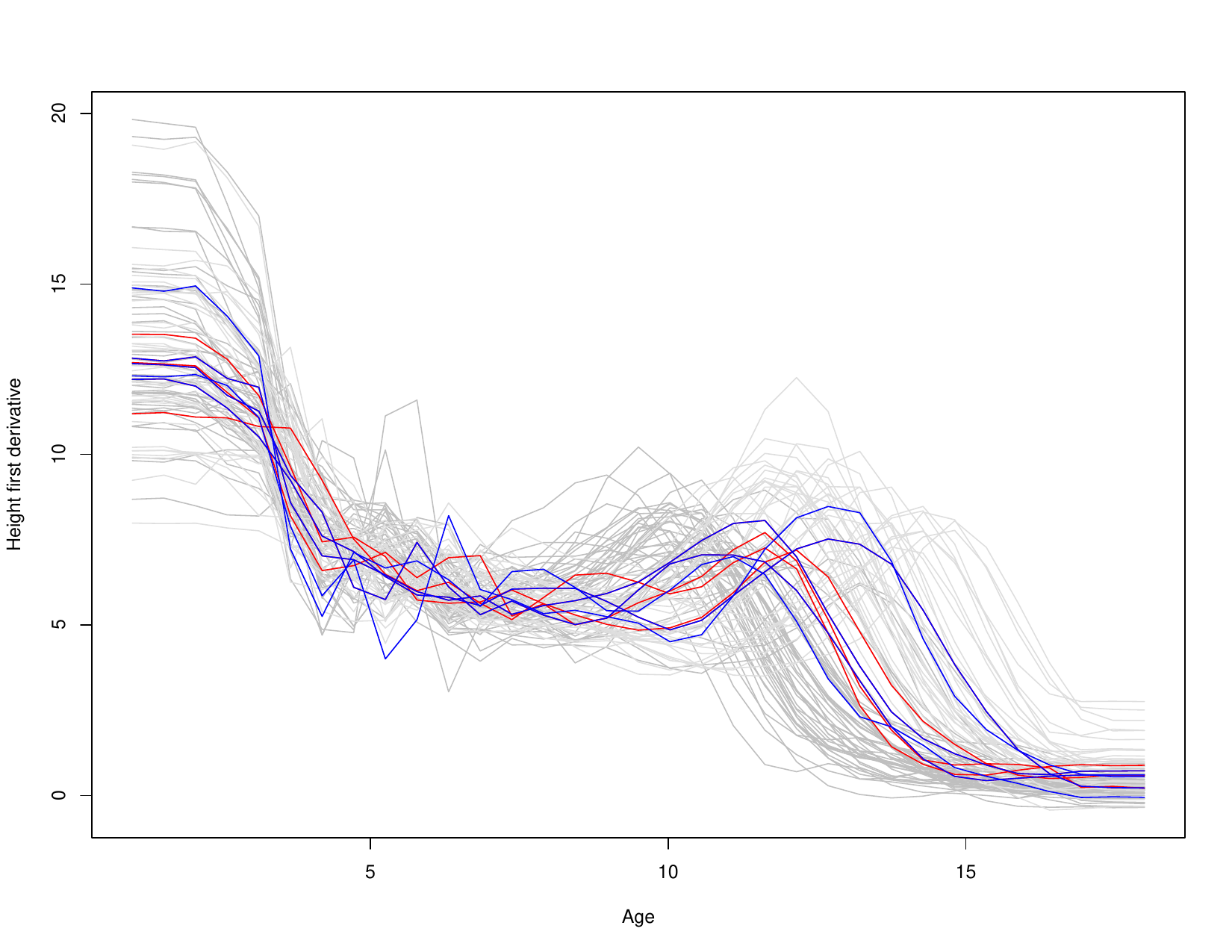} \par
\caption{ The dark grey curves show the velocity growth for boys, while the light grey for girls. The red curves are the $5\%$ global deepest curves, while the blue ones are the  $5\%$ local deepest curves for $\beta=0.2.$ } \label{centralesgrowth}
\end{figure}

Zuo and Serfling \cite{ZS00} established the general properties that depth measures should satisfy (\textbf{P. 1} - \textbf{P. 6}). Paindaveine and Van Bever \cite{PB13} analyze some of those properties for their proposal. In this work we study the properties that a local depth should fulfill, specifically, we show that under certain conditions, our proposal satisfies them.

The first property deals with the invariance of the local depths. For the finite dimensional case, IDLD is independent of the coordinate system. This property is inherited from the IDD. Since IDLD is a generalization of IDD, which  is not in  general affine invariant (i.e., let $A$ be a non-singular linear transformation in $\mathbb{R}^p$ and $P_{AX}$ denote the distribution
of $AX;$ then $D(Ax,P_{AX})$ is not equal to $D(x,P_{X})$),  neither is IDLD. It is clear that IDLD  is also invariant under translations and changes of scale.

The proofs of properties \textbf{P. 1} - \textbf{P. 6} appear in the Appendix.
\begin{P1*} Let $\mathbb{E}=\mathbb{R}^d$ with the Euclidean norm and  $X \in \mathbb{E}$ a random vector, let $Q$ be the uniform measure on the unit sphere of  $\mathbb{E}'$ independent of $P_X.$ Let $A : \mathbb{E} \rightarrow \mathbb{E}$ be an orthogonal transformation  and $\beta \in (0,1].$ Then
$IDLD^{\beta}(Ax,P_{AX}) = IDLD^{\beta}(x,P_X).$
\label{InvarianzaAfin}
\end{P1*}
\begin{remark}
 It is well known that the spatial median is not affine invariant, hence, transformation and retransformation methods have been designed
 to construct affine equivariant multivariate medians (Chakraborty and Chauduri \cite{CC96} and Chakraborty and Chauduri \cite{CC98}).
  IDLD can be modified following the ideas of  Kot\'{i}k and Hlubinka \cite{KH17} to attain this property.
\end{remark}
Depth measures are powerful analytical tools, especially in  cases where the random element enjoys symmetry properties. Local depths should locally (restricted to certain neighborhoods) inherit these properties.
Hence we  give an appropriate  definition of local symmetry.

\begin{defn}
\label{bsimetrica}

Let $Z$ be a real random variable and $\beta \in (0,1].$ Then $Z$ is said to be  $\beta$-symmetric about $\theta$ if the cumulative function distribution $F$
satisfies
\begin{equation}
 F \left( \theta + \lambda_{\theta}^{\beta'} \right) - F( \theta ) = \frac{\beta'}{2},  \mbox{ for every } 0<\beta' \leq \beta.
 \label{bsymm}
\end{equation}

 A random element $X$ in a Banach space $\mathbb{E}$ is  $\beta$-symmetric about $\theta$ if for every not null  $f \in \mathbb{E}',$  $f(X)$ is $\beta$-symmetric.
\end{defn}

The notion of $\beta$-symmetry aims to locally capture the behavior of a unimodal random variable on a neighborhood of probability $\beta,$ about $\theta,$ the locally deepest point. Figure \ref{betaSimLindo} (a) and (b) exhibit a bimodal distribution, with modes at $\theta=1$ and $\theta=4.$ On the former, both modes are  local symmetry points for $\beta=0.25$, while on the latter $\theta=4$ is a local symmetry point for $\beta=0.4$ but $\theta=1$ is not a local symmetry point for $\beta=0.4,$ the shaded area around $\theta=1$ is non-symmetrical.

\begin{figure}
\centering
\subfigure[$\theta=1$ and $\theta=4$ are local symmetry points with locality level $0.25$]{\includegraphics[width=50mm]{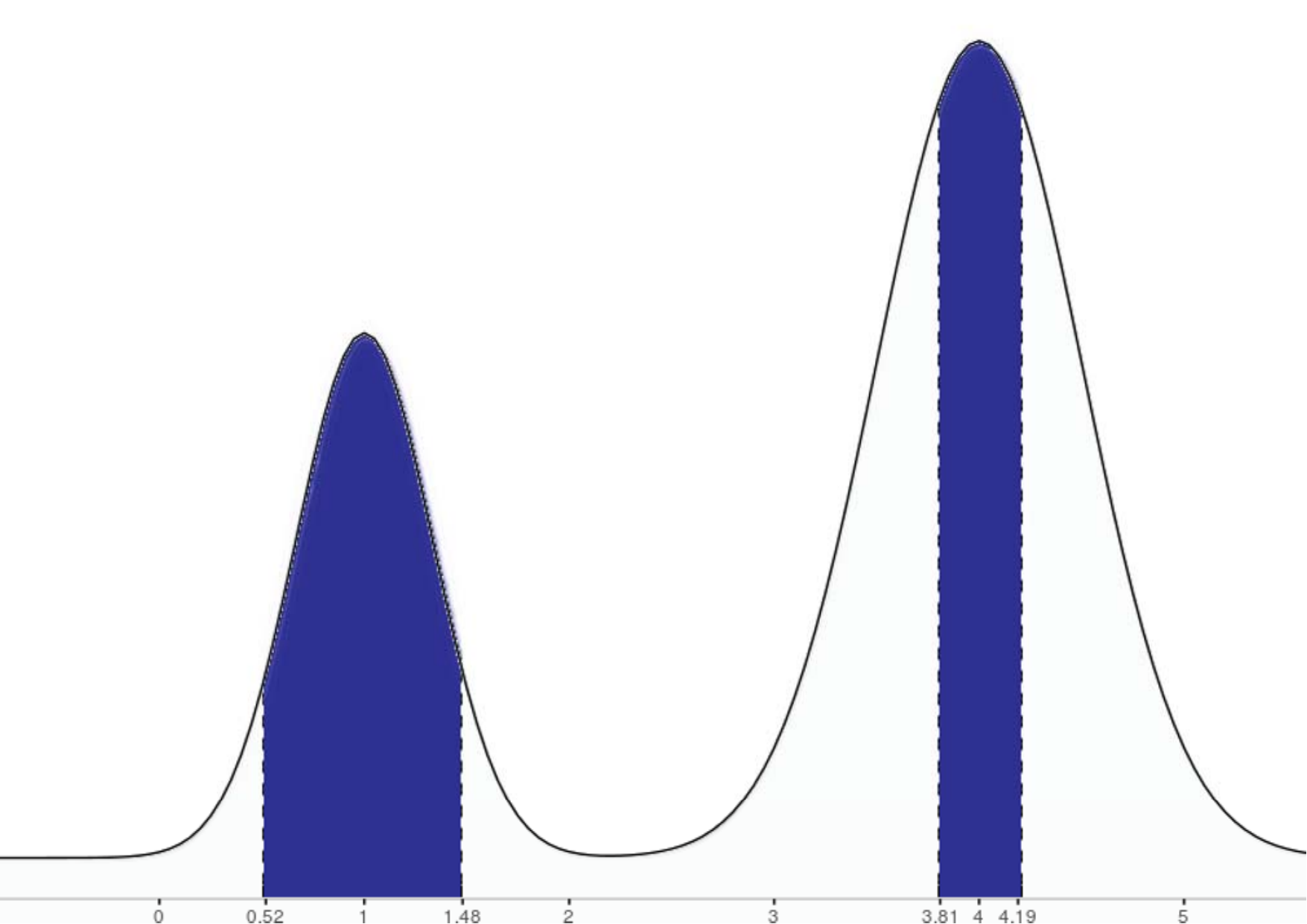}} \hspace{10mm}
\subfigure[ $\theta=4$ is a local symmetry point with locality level $0.4,$ while $\theta=1$ it it not a local symmetry point at local level $0.4$]{\includegraphics[width=50mm]{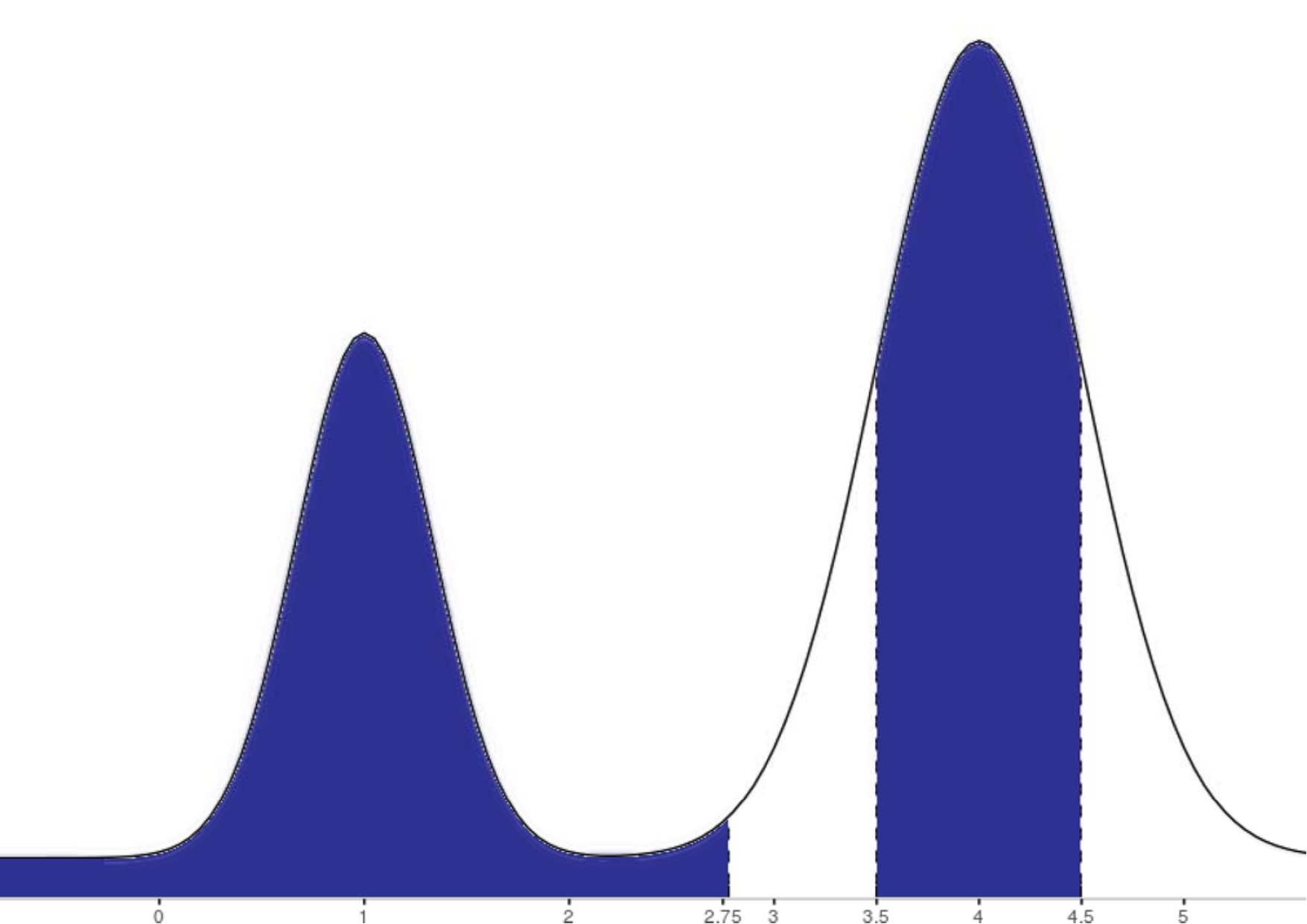}}
\caption{ Local symmetry points.} \label{betaSimLindo}
\end{figure}

An important property of depth measures is  maximality at the center, meaning that if $P$ is symmetric about  $\theta,$ then $D(x,P)$ attains its maximum value at that point. This property should be inherited by local depths if the distribution of $P$ is unimodal and convex. Local depths are relevant  for detecting local features, for instance local centers, hence our aim is to extend the property of maximality at the center  to each point  $\theta,$ that is $\beta$-symmetry.

\begin{P2*} Let $X \in \mathbb{E}$ be a random continuous element $\beta$-symmetric about $\theta.$ For $\beta \in (0,1]$ we have that
\begin{equation}
IDLD^{\beta'}(\theta,P_{X}) = \displaystyle \max_{x \in \mathbb{E}} IDLD^{\beta'}(x,P_X), \mbox{ for every } 0<\beta' \leq \beta.
\end{equation}
\label{bmaximality}
\end{P2*}

Proposition \ref{bcsymmetry} bridges the definition of $\beta$-symmetry with the usual definition of $C$-symmetry (see \cite{ZS00}).
According to Dharmadhikari and Joag-dev \cite{DJ88} a random variable $Z$ is unimodal on $z_0$ if the cumulative distribution function $F_Z$ is convex on $(-\infty, z_0)$ and it is concave  on $(z_0, \infty).$ In addition, if it is absolutely continuous, then the density function is increasing on $(-\infty, z_0)$ and decreasing on $( z_0,\infty).$ It is important to remark that in what follows we consider the continuous representantion of the density function.

\begin{proposition} Let $X \in \mathbb{E}$ be a random continuous element $C$-symmetric about $\theta.$ Then  $X$ is $\beta$-symmetric about $\theta$ for each $\beta \in (0,1].$
\label{bcsymmetry}
\end{proposition}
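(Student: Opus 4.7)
\textbf{Proof plan for Proposition \ref{bcsymmetry}.} The plan is to reduce the Banach-space statement to a one-dimensional one, then use classical $C$-symmetry together with continuity of the distribution to produce the exact equality required by Definition \ref{bsimetrica}. Since $\beta$-symmetry of $X$ about $\theta$ is defined projection-wise (for every $f \in \mathbb{E}'$, $f(X)$ must be $\beta$-symmetric about $f(\theta)$), the first step is to fix an arbitrary $f \in \mathbb{E}'$ and argue that $C$-symmetry of $X$ about $\theta$ passes through $f$: by linearity, $f(X) - f(\theta) = f(X - \theta)$ has the same law as $f(\theta - X) = f(\theta) - f(X)$. Hence $Y := f(X)$ is a continuous real random variable that is centrally symmetric about $\mu := f(\theta)$.

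Once the reduction is in place, write $F := F_f$ for the cumulative distribution function of $Y$. Central symmetry together with continuity of $F$ gives two facts I would record explicitly: (i) $F(\mu) = 1/2$, and (ii) $F(\mu + t) - F(\mu) = F(\mu) - F(\mu - t)$ for every $t \geq 0$. Combining these, for any $\lambda \geq 0$,
\begin{equation*}
F(\mu + \lambda) - F(\mu - \lambda) \;=\; 2\bigl(F(\mu + \lambda) - F(\mu)\bigr).
\end{equation*}

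Now fix any $0 < \beta' \leq \beta$ and plug this identity into Definition \ref{localityparamdef} applied to $F$ at the point $\mu$: the neighborhood width $\lambda^{\beta'}_{\mu}$ equals the infimum of those $\lambda > 0$ satisfying $F(\mu + \lambda) - F(\mu) \geq \beta'/2$. Since $F$ is continuous and the map $\lambda \mapsto F(\mu + \lambda) - F(\mu)$ is nondecreasing, the infimum is attained with equality (as already noted in the Remark following Definition \ref{localityparamdef}), yielding $F(\mu + \lambda^{\beta'}_{\mu}) - F(\mu) = \beta'/2$. This is precisely condition (\ref{bsymm}) for $Y = f(X)$ about $\mu = f(\theta)$. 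Since $f \in \mathbb{E}'$ was arbitrary and $\beta' \in (0, \beta]$ was arbitrary, $X$ is $\beta$-symmetric about $\theta$ for every $\beta \in (0,1]$.

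I do not anticipate a serious obstacle: the only subtle point is making sure the infimum in Definition \ref{localityparamdef} is actually attained and produces \emph{equality} (not just the weak inequality), which is where the hypothesis that $X$ is continuous is used (so that $F$ is continuous, hence $F(\mu) = 1/2$ exactly and the level set $\{\lambda : F(\mu+\lambda) - F(\mu) \geq \beta'/2\}$ is closed with the boundary value equal to $\beta'/2$). Everything else is a direct translation of classical central symmetry to the projected one-dimensional distributions.
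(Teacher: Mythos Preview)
Your proof is correct and follows essentially the same approach as the paper: fix $f\in\mathbb{E}'$, use that $C$-symmetry of $X$ about $\theta$ makes $f(X)$ symmetric about $f(\theta)$, apply the identity $F_f(f(\theta)+\lambda)-F_f(f(\theta)-\lambda)=2\bigl(F_f(f(\theta)+\lambda)-F_f(f(\theta))\bigr)$ at $\lambda=\lambda^{\beta'}_{f(\theta)}$, and invoke continuity so the defining infimum is attained with equality. Your write-up is in fact more explicit than the paper's in justifying the projection step and the role of continuity.
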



 Proposition \ref{x0betasim} describes the $\beta$-symmetry points of $X.$

  \begin{proposition} \label{x0betasim}
  	Let $X$ be a $\beta$-symmetric random element in  $\mathbb{E}$ and $x_0 \in \mathbb{E}$ such that $IDLD^{\beta'}(x_0,P) = \frac{1}{2}$ for every $0< \beta' \leq \beta.$ Then $x_0$ is a $\beta$-symmetry point.
  \end{proposition}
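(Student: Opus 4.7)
The plan is to reduce the Banach-valued statement to the univariate characterization that $LD_S^{\beta'}(t,F)$ attains its maximum possible value $1/2$ if and only if $t$ is a univariate $\beta'$-symmetry point of $F$. To see this, fix an absolutely continuous $F$ and set $a = F(t + \lambda_t^{\beta'}) - F(t)$ and $b = F(t) - F(t - \lambda_t^{\beta'})$. The Remark after Definition \ref{localityparamdef} shows that the infimum in (\ref{localityparam}) is attained, so $a + b = \beta'$. The AM-GM inequality then gives $ab \leq (\beta')^2/4$, hence $LD_S^{\beta'}(t,F) = 2ab/(\beta')^2 \leq 1/2$, with equality if and only if $a = b = \beta'/2$. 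The latter is exactly the univariate $\beta'$-symmetry condition of Equation (\ref{bsymm}) with $\theta = t$.

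Next I would push this univariate fact through the integral in (\ref{IDLD}). Because $LD_S^{\beta'}(f(x_0), P_f) \leq 1/2$ for every $f$ and the integral equals $1/2$ by hypothesis, the integrand must equal $1/2$ for $Q$-almost every $f$. By the previous step, $f(x_0)$ is then a univariate $\beta'$-symmetry point of $f(X)$ for $Q$-a.e. $f$, for each $\beta' \in (0,\beta]$. To combine the uncountable family of null sets indexed by $\beta'$, I would pick a countable dense sequence $(\beta'_n)$ in $(0,\beta]$, find a single $Q$-null set $N$ outside which the $\beta'_n$-symmetry condition at $f(x_0)$ holds for every $n$, and then extend to every $\beta' \in (0,\beta]$ using monotonicity of $\lambda_{f(x_0)}^{\beta'}$ in $\beta'$ together with continuity of $F_f$.

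To conclude that $x_0$ is a $\beta$-symmetry point in the sense of Definition \ref{bsimetrica}, the equality must hold for \emph{every} $f \in \mathbb{E}'$, not only for $Q$-a.e. $f$. The final step is to upgrade from $Q$-a.e. to everywhere using the full support of $Q$ (as with the non-degenerate Gaussian or uniform-sphere measures proposed in Section \ref{IDLD}) together with continuity of $f \mapsto LD_S^{\beta'}(f(x_0), P_f)$. The hypothesis that $X$ is $\beta$-symmetric guarantees, via Property P.\ 2, that the value $1/2$ is actually attained by IDLD at the symmetry center, so the hypothesis on $x_0$ is non-vacuous. The main obstacle is precisely this last extension from $Q$-a.e. to all $f$: it requires continuous dependence of the projected distributions $P_f$ on $f$ in the weak sense and full support of $Q$, and in the absence of such assumptions one can only conclude $\beta$-symmetry in a $Q$-almost-sure sense.
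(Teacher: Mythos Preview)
Your core argument coincides with the paper's: the paper maximizes the quadratic $h(t)=\tfrac{2}{\beta^2}(\beta-t)t$ where $t=F_f(f(x)+\lambda)-F_f(f(x))$, which is exactly your AM--GM step in disguise, and then uses the same ``nonnegative integrand with zero integral'' argument to force $LD_S^{\beta'}(f(x_0),P_f)=\tfrac12$ for $Q$-a.e.\ $f$.

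Where you go further than the paper is in the two refinements you flag. The paper's proof simply fixes a single $\beta'$, obtains the $Q$-a.s.\ equality, states that therefore $f(x_0)$ is a $\beta$-symmetry point of $f(X)$, and concludes that $x_0$ is a $\beta$-symmetry point of $X$---without discussing either the passage across all $\beta'\in(0,\beta]$ or the upgrade from $Q$-a.e.\ $f$ to every $f\in\mathbb{E}'$ that Definition~\ref{bsimetrica} literally demands. So the ``main obstacle'' you isolate is not a defect of your argument relative to the paper: the paper leaves exactly that gap open, and your countable-$\beta'_n$ plus full-support/continuity outline is a reasonable way to try to close it. In short, your proposal matches the paper's proof and is, if anything, more scrupulous about its limitations.
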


 \textbf{P. 3} establishes that  the IDLD is monotone relative to the deepest point. Several auxiliary results that appear in the Appendix must be stated before proving this property.

  \begin{P3*} \label{propP3}
  	Let  $\mathbb{E}$ be a separable Banach space and $\mathbb{E}'$ the corresponding dual space. Let  $X$ be a random  $C$-symmetric element about
  	$\theta$ with probability measure  $P.$ Let  $Q$ be a probability measure in $\mathbb{E}'$ independent of $P$ and assume that for every
  	$f \in \mathbb{E}',$  $f(X)$  has unimodal density function $g,$ symmetric about  $f(\theta)$ and fulfills
   \begin{equation} \label{desigualdadLema3Propiedad3}
 g(t) \geq 2 \frac{g(t+\lambda_{t}^{\beta})g(t-\lambda_{t}^{\beta})}{g(t+\lambda_{t}^{\beta})+g(t-\lambda_{t}^{\beta})} \ \forall t \in \mathbb{R}, \mbox{ }  Q-a.s.
 \end{equation}

    Then, for every $x\in \mathbb{E}$ and $\beta \in (0,1],$
  	$$IDLD^{\beta}(x,P) \leq IDLD^{\beta}((1-t)\theta + xt,P) \ \ \ \mbox{ for every } t \in [0,1].$$
  \end{P3*}

\begin{remark}
Inequality (\ref{desigualdadLema3Propiedad3}) holds for the standard normal distribution. Hence, the projections of a Gaussian process fulfill \textbf{P. 3.}
\end{remark}

\begin{P4*}
\label{vanishinf}

Assume that $\sup_{\|u\|=1} \left\{ f: f(u) \leq \epsilon \right\}=O(\epsilon),$ where $O(\epsilon)$ is a function such that $\lim_{\epsilon\rightarrow0}O(\epsilon)=0.$ Then,
$$ \displaystyle \lim_{||x|| \to + \infty} IDLD^{\beta}(x,P) = 0.$$
\end{P4*}

We omit the proof of \textbf{P. 4.} since it is analog to the proof of Theorem 1 given by Cuevas and Fraiman \cite{CF09}.


\begin{P5*} \label{P5} Let $X \in \mathbb{E}$ be a random continuous element and $\beta \in (0,1].$ Then
 $IDLD^{\beta}(\cdot,P): \mathbb{E} \rightarrow \mathbb{R}$ is continuous.
\end{P5*}

\begin{P6*} Le $\mathbb{E}$ be a separable Banach space and $(P_n)_{n \geq 1}$ defined on $\mathbb{E}$ such that $P_n$ weakly converges to $P$ . Then, for every $\beta \in (0,1],$  $IDLD^{\beta}(x,P_n) \rightarrow IDLD^{\beta}(x,P).$
\end{P6*}


\section{Empirical Version and Asymptotic Results}

In this section we introduce the empirical counterpart of the IDLD and give the main asymptotic results. Let $Z$ be an absolutely continuous random variable with distribution $P^1.$ Suppose given  $Z_1, \dots, Z_n$  iid random variables, also with distribution $P^1,$  denote $F$ to the cumulative function distribution of $P^1$ and $F_n$ to the empirical counterpart.

First of all, recall  the definition ofPaindaveine and Van Bever \cite{PB13}  of the empirical local unidimensional simplicial depth. Let $ELD_{S}^{\beta(k)} (\cdot,F_n) : \mathbb{R} \longrightarrow \left[ 0,1/2 \right].$ Then
\begin{equation*}
ELD_{S}^{\beta(k)} (z,F_n)  = \frac{2}{\beta(k)^2} \left[F_n\left( z + \lambda_{z,n}^{\beta(k)} \right)  - F_n(z) \right] \left[F_n(z) - F_n\left( z-\lambda_{z,n}^{\beta(k)} \right) \right],
\end{equation*}
where
$
\lambda_{z,n}^{\beta(k)} = \inf_{\lambda>0} \left\lbrace  F_n\left( z + \lambda_{z,n}^{\beta(k)} \right)  - F_n\left( z - \lambda_{z,n}^{ \beta(k)} \right)   = \beta(k) \right\rbrace 
$
 and $\beta(k)=\frac{[n\beta]}{n}$ where  $[\cdot]$ is the integer part function.
Remark \ref{propiedades chiquitas} entails the well-definedness of the empirical neighborhood width, $\lambda_{z,n}^{\beta(k)}.$
\begin{remark} \label{propiedades chiquitas}
Let $\beta \in (0,1]$ and $Z_1, \dots, Z_n$ be  a random sample of iid variables with distribution $F.$ Given $z \in \mathbb{R},$ put,
for each $1 \leq j \leq n,$
$d_j(z) = |Z_j - z|$ and let $d^{j}(z)$ denote the $j-$th order statistics of $d_1(z), \dots, d_n(z).$
Let $k = [n \beta],$  it is clear that $ \# \{ Z_j \ : Z_j \in \ [z-d^{k}(z), z+d^{k}(z)] \} =  k, $ almost surely.
Hence,
$F_n(z+d^{k}(z)) - F_n(z-d^{k}(z)) = \frac{[n \beta]}{n} = \beta(k),$ and   the  empirical neighborhood width is  $\lambda_{z,n}^{\beta} = d^{k}(z).$
\end{remark}

Then the empirical counterpart of IDLD is given as follows.

\begin{defn} Let $\beta \in (0,1],$ $X: \Omega \to \mathbb{E}$ be a continuous random element and $X_{1}, \dots, X_{n}$  a random sample with the same distribution as $X.$ Let $k=[n \beta].$ For each $x \in \mathbb{E}$ and $f \in \mathbb{E}',$ define
\begin{equation} \label{lambdaempirico}
\lambda_{f(x),n}^{\beta(k)} = inf \left\{ \lambda > 0 : F_{f,n}(f(x) + \lambda) - F_{f,n}(f(x) - \lambda) = \frac{k}{n} \right\}.
\end{equation}
Let $\beta(k) = \frac{k}{n}.$ The empirical version of IDLD of locality level  $\beta(k)$  is
\begin{equation} \label{ELIDD}
EIDLD^{\beta(k)}(x,P) = IDLD^{\beta(k)}(x,P_n).
\end{equation}
\end{defn}

%
%
%
%
%

%

The theorems below establish the uniform strong convergence of the empirical counterpart of the univariate simplicial local depth to the population counterpart. The proofs appear in the Appendix as well as several lemmas needed to establish these results. 

\begin{theorem} Let $\mathbb{E}$ be a separable Banach space with dual  space  $\mathbb{E}'.$ Suppose  given $X_1, \dots, X_n$ a random sample of elements on $\mathbb{E}$ with probability measure $P$ and $\beta \in (0,1].$
Then, we have

\begin{enumerate}[(a)]
\item \begin{equation}
E \left( \sup_{x \in \mathbb{E}} \Big| ELD_S^{\beta(k)}(f(x),P_{n,f}) - LD_S^{\beta}(f(x),P_f) \Big| \right) \xrightarrow[n \to + \infty]{} 0 \ \mbox{ for every } \ f\in \mathbb{E}'.
\end{equation}

\item \begin{equation}
E \left( \sup_{x \in \mathbb{E}} \Big| EIDLD^{\beta(k)}(x,P_n) - IDLD^{\beta}(x,P) \Big| \right) \xrightarrow[n \to + \infty]{} 0.
\end{equation}
\end{enumerate}
\end{theorem}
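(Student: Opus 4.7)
The plan is to bootstrap from Lemma~\ref{desigualdadLDS} (the pointwise deterministic inequality) to uniform-in-$x$ expected convergence, exploiting the crucial feature that the right-hand side of that inequality does not depend on the argument $z$. So the uniform control is essentially free once one has the pointwise estimate; the remaining work is measure-theoretic bookkeeping.

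For part (a), fix $f \in \mathbb{E}'$ and let $F_f,F_{f,n}$ denote the CDFs of $f(X)$ and of $f(X_1),\dots,f(X_n)$. Applying Lemma~\ref{desigualdadLDS} with $z = f(x)$ to these CDFs gives, for every $x \in \mathbb{E}$,
\begin{equation*}
\bigl| ELD_S^{\beta(k)}(f(x),P_{n,f}) - LD_S^{\beta}(f(x),P_f) \bigr|
\leq \tfrac{1}{2}\!\left(1 - \left(\tfrac{\beta(k)}{\beta}\right)^{\!2}\right)
+ \tfrac{2}{\beta^2}\!\left(\tfrac{8}{n} + 4\,\|F_{f,n} - F_f\|_\infty\right).
\end{equation*}
The right-hand side is independent of $x$, so I can take $\sup_{x\in\mathbb{E}}$ on the left at no cost. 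Since $\beta(k) = [n\beta]/n \to \beta$, the deterministic part vanishes. By the Glivenko--Cantelli theorem, $\|F_{f,n} - F_f\|_\infty \to 0$ almost surely, and since this quantity is uniformly bounded by $1$, dominated convergence yields $E\|F_{f,n} - F_f\|_\infty \to 0$. Taking expectations in the displayed inequality then finishes (a).

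For part (b), start from the definition in~(\ref{IDLD}) and~(\ref{ELIDD}) and apply Jensen's inequality followed by pushing the supremum inside the (nonnegative) integrand:
\begin{equation*}
\sup_{x\in\mathbb{E}} \bigl|EIDLD^{\beta(k)}(x,P_n) - IDLD^\beta(x,P)\bigr|
\leq \int \sup_{x\in\mathbb{E}} \bigl| ELD_S^{\beta(k)}(f(x),P_{n,f}) - LD_S^{\beta}(f(x),P_f) \bigr|\, dQ(f).
\end{equation*}
Taking expectations and applying Tonelli (the integrand is nonnegative) moves $E$ inside the $Q$-integral. By part (a), the $Q$-integrand tends to $0$ pointwise in $f$. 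Since $ELD_S^{\beta(k)}$ and $LD_S^{\beta}$ both take values in $[0,1/2]$, the integrand is uniformly bounded by $1/2$. A second application of dominated convergence, this time with respect to $Q$, delivers (b).

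The only place where anything could go wrong is verifying that Lemma~\ref{desigualdadLDS} indeed produces a $z$-free bound; once that is confirmed by inspection, the rest of the argument is routine (one Glivenko--Cantelli invocation, one Fubini, two DCT). The most delicate conceptual point is ensuring that for $Q$-almost every $f$ the random variable $f(X)$ has the continuity needed for the Lemma~\ref{desigualdadLDS} setup, which follows from $X$ being a continuous random element in $\mathbb{E}$.
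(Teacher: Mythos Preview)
Your argument is correct and shares the same skeleton as the paper's: both invoke Lemma~\ref{desigualdadLDS}, observe that its right-hand side is free of $z$, take the supremum over $x$ for free, and then for part (b) push the supremum inside the $Q$-integral and swap $E$ with $\int dQ$ via Tonelli. The genuine difference lies in how the expectation in part (a) is driven to zero. You use Glivenko--Cantelli plus dominated convergence to conclude $E\|F_{f,n}-F_f\|_\infty\to 0$ for each fixed $f$, then in part (b) apply a second dominated convergence with respect to $Q$ (using the uniform bound $1/2$). The paper instead splits the expectation over the event $\{\sup_x|\cdots|>\epsilon\}$ and its complement, then bounds the tail probability via the Dvoretzky--Kiefer--Wolfowitz inequality; the resulting bound is explicit and, crucially, \emph{uniform in $f$}, so in part (b) no second DCT is needed---one simply integrates the uniform bound $2\epsilon$ against $Q$. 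Your route is more elementary and avoids DKW entirely; the paper's route yields a quantitative rate that is uniform over all projections $f$, which is a slightly stronger intermediate statement (and is in fact reused in the proof of Theorem~\ref{consistenciactp}).
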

\begin{theorem} \label{consistenciactp} Let $X$ be a random element on $\mathbb{E}$ a separable Banach space with associated probability measure $P$ such that
$E(f(X)^2) < +\infty \ \mbox{ for every } \ f \in \mathbb{E}'.$ Let $X_1, \dots, X_n$ be a random sample following the same distribution as  $X$ and $\beta \in (0,1].$ Then,
\begin{equation*}
\mathbb{P} \left( \sup_{x \in \mathbb{E}} \Big| EIDLD^{\beta(k)}(x,P_n) - IDLD^{\beta}(x,P) \Big| \xrightarrow[n \to +\infty]{} 0 \right) = 1.
\end{equation*}
\end{theorem}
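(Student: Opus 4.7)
The plan is to derive a deterministic bound (valid $Q$-a.s.\ in $f$) from Lemma \ref{desigualdadLDS}, integrate in $f$ against $Q$, and then upgrade Glivenko--Cantelli from a fixed-$f$ statement to an $\omega$-wise statement by a Fubini swap.

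First, using that the IDLD and its empirical version are integrals against $Q$, and applying Lemma \ref{desigualdadLDS} pointwise in $f$, I would obtain for every $x \in \mathbb{E}$
\begin{equation*}
\bigl| EIDLD^{\beta(k)}(x,P_n) - IDLD^{\beta}(x,P) \bigr| \leq \int_{\mathbb{E}'} \bigl| ELD_S^{\beta(k)}(f(x),F_{n,f}) - LD_S^{\beta}(f(x),F_f) \bigr|\, dQ(f).
\end{equation*}
Since the bound in Lemma \ref{desigualdadLDS} does \emph{not} depend on the point $z$, plugging $z=f(x)$ into that bound yields a right-hand side that is uniform in $x$:
\begin{equation*}
\sup_{x \in \mathbb{E}} \bigl| EIDLD^{\beta(k)}(x,P_n) - IDLD^{\beta}(x,P) \bigr| \leq \tfrac{1}{2}\!\left(1 - \left(\tfrac{\beta(k)}{\beta}\right)^{\!2}\right) + \tfrac{16}{n\beta^2} + \tfrac{8}{\beta^2}\!\int_{\mathbb{E}'}\! \|F_{n,f} - F_f\|_{\infty}\, dQ(f).
\end{equation*}
The first two summands are deterministic and tend to $0$ because $\beta(k)=[n\beta]/n \to \beta$. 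So the theorem reduces to showing that the $Q$-integral of $\|F_{n,f}-F_f\|_\infty$ tends to $0$ for $P$-a.e.\ realization.

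For the integral term I would proceed in two steps. Step one: for every fixed $f\in \mathbb{E}'$, the variables $f(X_1),\ldots,f(X_n)$ are i.i.d.\ with c.d.f.\ $F_f$ (the moment hypothesis $E(f(X)^2)<\infty$ guarantees $F_f$ is a genuine distribution function), so the classical Glivenko--Cantelli theorem gives $\|F_{n,f}-F_f\|_\infty \xrightarrow[n\to\infty]{\text{a.s.}} 0$. Step two (the key swap): define the exceptional set
\begin{equation*}
N := \bigl\{(f,\omega)\in \mathbb{E}'\times \Omega :\; \|F_{n,f}(\omega)-F_f\|_\infty \not\to 0 \bigr\}.
\end{equation*}
For each $f$ the $\omega$-section has $P$-measure $0$ by Glivenko--Cantelli, hence by Tonelli $(Q\otimes P)(N)=0$, so for $P$-a.e.\ $\omega$ the $f$-section of $N$ has $Q$-measure $0$. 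That is, on an $\omega$-event of probability one,
$\|F_{n,f}(\omega)-F_f\|_\infty \to 0$ for $Q$-a.e.\ $f$. Since the integrand is bounded by $1$ and $Q$ is a probability measure, dominated convergence gives
\begin{equation*}
\int_{\mathbb{E}'} \|F_{n,f}(\omega)-F_f\|_{\infty}\, dQ(f)\ \xrightarrow[n\to \infty]{}\ 0, \quad P\text{-a.s.},
\end{equation*}
which, combined with the deterministic decay of the other two summands, yields the almost sure uniform convergence claimed in the theorem.

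The main obstacle I anticipate is the Fubini/Tonelli swap: it requires joint measurability of the map $(f,\omega)\mapsto \|F_{n,f}(\omega)-F_f\|_\infty$ on $\mathbb{E}'\times \Omega$. Separability of $\mathbb{E}'$ lets me replace the supremum over $t\in \mathbb{R}$ implicit in the $\|\cdot\|_\infty$ by a supremum over a countable set (rationals, or an enumeration of jump locations), which makes the map measurable in $(f,\omega)$ provided that $f\mapsto F_{n,f}(\omega;t)=\tfrac{1}{n}\sum \mathbf{1}\{f(X_j(\omega))\leq t\}$ is measurable for each fixed $t$ — which follows from the measurability of evaluation $f\mapsto f(X_j(\omega))$ on the dual space. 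Once this technical point is in place, the rest of the proof is a clean assembly of Lemma \ref{desigualdadLDS}, Glivenko--Cantelli, Fubini and dominated convergence.
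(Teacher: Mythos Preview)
Your proof is correct and takes a genuinely different route from the paper's. Both arguments start from the same uniform-in-$x$ bound coming from Lemma~\ref{desigualdadLDS}, reducing the problem to showing that $\int_{\mathbb{E}'}\|F_{n,f}-F_f\|_\infty\,dQ(f)\to 0$ almost surely. From there the paper bounds this integral by $\sup_{f\in\mathbb{E}'}\|F_{n,f}-F_f\|_\infty$, picks a near-maximizing direction $f_0$, applies the Dvoretzky--Kiefer--Wolfowitz inequality to obtain exponential tail bounds on $P(A_n)$, and concludes via Borel--Cantelli. You instead keep the $Q$-integral, use Fubini/Tonelli to turn the per-$f$ Glivenko--Cantelli statement into an $\omega$-wise statement holding for $Q$-a.e.\ $f$, and finish with dominated convergence. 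Your approach is more elementary (no quantitative DKW bound, only classical Glivenko--Cantelli) and sidesteps a delicate point in the paper's argument, namely that the near-maximizer $f_0$ depends on $\omega$ and $n$, which makes the direct application of DKW to $f_0$ somewhat informal. The price you pay is the joint-measurability verification for the Tonelli swap, which you address adequately via separability. One small remark: the second-moment hypothesis $E(f(X)^2)<\infty$ is not actually needed for your argument (nor, as written, for the paper's); $F_f$ is a bona fide c.d.f.\ regardless, and Glivenko--Cantelli requires no moment assumption.
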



\section{Local Depth Regions}

In this section we define the \textit{$\alpha$ local depth inner region at locality level $\beta.$}
 Ideally, these central regions will be invariant of the coordinate system and nested. We also study, under mild regularity conditions, the asymptotic behavior.

Denote by $LD^{\beta}$ a local depth measure and $ELD^{\beta}$ its empirical counterpart. In particular, one can consider the integrated dual local    depth defined in Section \ref{IDLDsection}.
\begin{defn}
Let $\mathbb{E}$ be a separable Banach space, let $X: \Omega \rightarrow \mathbb{E}$ a random element with associated probability measure $P.$
Fix  $\beta \in (0,1],$ a locality level,  and $\alpha \in [0,\frac{1}{2}].$ The \textit{local  depth inner region at locality level} $\beta$
\textit{of level} $\alpha$ is defined to be
\begin{equation}
\label{ldregion}
R_{\beta}^{\alpha} = \left \{ x \in \mathbb{E}: \ LD^{\beta}(x,P) \geq \alpha \right \}.
\end{equation}
\end{defn}

Let $X_1, \dots, X_n$ be a random sample of elements on $\mathbb{E}.$ Then the empirical counterpart of $R_{\beta}^{\alpha}$ is
$$ R_{n}^{\alpha} =  R_{n,\beta}^{\alpha} = \left \{ x \in \mathbb{E}: \ ELD^{\beta}(x,P_n) \geq \alpha \right \}. $$

Throughout this section the locality level $\beta$ will remain fixed, hence we write  $R^{\alpha}$ (respectively, $R_n^{\alpha}$) for $ R_{\beta}^{\alpha}$  (respectively. $R_{n,\beta}^{\alpha}$) when no ambiguity is possible.

\begin{remark} \label{PropiedadesRegionProfundidad1}
If $\mathbb{E}$ is a finite dimensional space, and $LD$ is invariant under orthogonal transformations, then $R^{\alpha}$ inherits this property.
\end{remark}
\begin{remark} \label{PropiedadesRegionProfundidad2}
If  $\alpha_1 \leq \alpha_2,$ then  $R_{\beta}^{\alpha_2} \subset R_{\beta}^{\alpha_1}.$
 \end{remark}

Theorem \ref{consistRalfa} shows that the empirical $\alpha$ local depth inner region at locality level $\beta$ is strongly consistent with  its corresponding population counterpart, under regularity conditions. We omit the proof since it is analog to the prove given by Zuo and Serfling \cite{ZS00} in Theorem 4.1.

\begin{theorem} \label{consistRalfa}
Let $\mathbb{E}$ be a separable Banach space and let $X: \Omega \rightarrow \mathbb{E}$ be a random element with associated  probability measure
 $P.$  Assume that
\begin{enumerate}[a)]
 \item $ \displaystyle LD^{\beta}(x,P) \xrightarrow[ \| x \| \to +\infty]{} 0.$
 \item $ \displaystyle \sup_{x \in \mathbb{E}} \left| ELD^{\beta}(x,P) - LD^{\beta}(x,P) \right| \xrightarrow[n \to +\infty]{} 0$ a.s.
\end{enumerate}
Then, for every $\epsilon > 0,$ $0 < \delta < \epsilon,$ $0 < \alpha$ and sequence $\alpha_n \rightarrow \alpha$:
\begin{enumerate}[(I)]
 \item There exists  an $n_0 \in \mathbb{N}$ such that
 $R^{\alpha + \epsilon} \subset R_{n}^{\alpha_n + \delta} \subset R_{n}^{\alpha_n} \subset R_{n}^{\alpha_{n} - \delta}
 \subset R^{\alpha - \epsilon} a.s.$  
 \item If $P \left( x \in \mathbb{E}: \ LD_{\beta}(x) = \alpha \right) = 0,$ then $R_{n}^{\alpha_n} \xrightarrow[n \to +\infty]{}
 R^{\alpha}$ a.s.
\end{enumerate}
\end{theorem}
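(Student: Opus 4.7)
The plan is to deduce (I) directly from the uniform convergence in hypothesis (b) together with $\alpha_n \to \alpha$, and to read off (II) as a short consequence of (I) combined with continuity of $P$ along a vanishing annulus around the level set $\{LD^{\beta}(\cdot,P)=\alpha\}$. Hypothesis (a) plays no role in the argument for (I); in (II) it enters only to guarantee that the relevant regions are bounded, so that the conclusion is non-trivial in the infinite-dimensional setting.

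For (I), the two middle inclusions $R_n^{\alpha_n+\delta} \subset R_n^{\alpha_n} \subset R_n^{\alpha_n-\delta}$ are immediate from Remark \ref{PropiedadesRegionProfundidad2}. For the two outer inclusions, set $\eta := (\epsilon-\delta)/3 > 0$. By hypothesis (b) and $\alpha_n \to \alpha$, choose $n_0$ so that on a probability-one event, for every $n \geq n_0$ one has simultaneously
\[
\sup_{x \in \mathbb{E}} \bigl| ELD^{\beta}(x,P_n) - LD^{\beta}(x,P) \bigr| < \eta
\quad \text{and} \quad
|\alpha_n - \alpha| < \eta.
\]
If $x \in R^{\alpha+\epsilon}$ then $LD^{\beta}(x,P) \geq \alpha+\epsilon$, hence
\[
ELD^{\beta}(x,P_n) > LD^{\beta}(x,P) - \eta \geq \alpha + \epsilon - \eta > \alpha_n + \delta,
\]
where the last strict inequality uses $\alpha - \alpha_n > -\eta$ and $\epsilon - \delta > 2\eta$. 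This places $x$ in $R_n^{\alpha_n+\delta}$. The symmetric chain, started from any $x \in R_n^{\alpha_n-\delta}$, gives $LD^{\beta}(x,P) > ELD^{\beta}(x,P_n) - \eta \geq \alpha_n - \delta - \eta > \alpha - \epsilon$, so $x \in R^{\alpha-\epsilon}$.

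Part (II) interprets the convergence $R_n^{\alpha_n} \to R^{\alpha}$ in the natural probabilistic sense $P\bigl(R_n^{\alpha_n} \,\triangle\, R^{\alpha}\bigr) \to 0$ almost surely. On the probability-one event of Part (I), for each fixed $\epsilon > 0$ and admissible $\delta \in (0,\epsilon)$ one has eventually $R^{\alpha+\epsilon} \subset R_n^{\alpha_n} \subset R^{\alpha-\epsilon}$; combining this with the monotonicity $R^{\alpha+\epsilon} \subset R^{\alpha} \subset R^{\alpha-\epsilon}$ yields
\[
R_n^{\alpha_n} \,\triangle\, R^{\alpha} \;\subset\; R^{\alpha-\epsilon} \setminus R^{\alpha+\epsilon} \;\subset\; \bigl\{ x \in \mathbb{E} : \alpha - \epsilon \leq LD^{\beta}(x,P) \leq \alpha + \epsilon \bigr\}.
\]
Therefore $\limsup_n P\bigl(R_n^{\alpha_n} \,\triangle\, R^{\alpha}\bigr) \leq P\bigl(|LD^{\beta}(X,P) - \alpha| \leq \epsilon\bigr)$, and letting $\epsilon \downarrow 0$ the right-hand side decreases to $P\bigl(LD^{\beta}(X,P) = \alpha\bigr) = 0$ by continuity of measure and the zero-mass hypothesis. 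The main obstacle, modest as it is, lies in arranging the constants in Part (I): the slack $\epsilon - \delta > 0$ must simultaneously dominate the uniform approximation error and the fluctuation $|\alpha_n - \alpha|$, which is exactly why the split $\eta = (\epsilon-\delta)/3$ is convenient.
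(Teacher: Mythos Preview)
Your proof is correct and follows essentially the same route as the paper's. In Part~(I) the paper argues by contradiction with the split $(\epsilon-\delta)/2$ while you argue directly with $(\epsilon-\delta)/3$; the content is identical. In Part~(II) the paper phrases the convergence via set-theoretic $\liminf$ and $\limsup$ (showing $\{LD^\beta>\alpha\}\subset\liminf R_n^{\alpha_n}\subset\limsup R_n^{\alpha_n}\subset R^\alpha$ and then invoking the zero-mass hypothesis), whereas you phrase it as $P(R_n^{\alpha_n}\,\triangle\,R^\alpha)\to 0$; both readings of the unlabelled convergence are legitimate and both are immediate consequences of the sandwich in~(I).

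One inaccurate aside: you claim hypothesis~(a) is used in~(II) to ensure boundedness of the regions. In fact neither your argument nor the paper's invokes~(a) at any point; the conclusion of~(II) holds without it, and nothing in your proof would change if~(a) were dropped.
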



\section{Numerical considerations}

Local depth measures are depths conditional to a neighborhood of each point of the space. These measures are sensitive to the size of the neighborhood, which depends on a parameter that must be chosen.  They may be classified into two families. On one hand, there are those depending on a global parameter, which indicates the size of the neighborhood for every point $x$ of the space. Within this family we find the proposals made by Cuevas et al. \cite{CFF07}, Agostinelli and Romanazzi \cite{AR11}, Chen et al. \cite{CDPB09}, Sguera et al. \cite{SGL214} and Agostinelli \cite{A18}. In most of these papers, heuristic strategies are introduced for choosing this parameter in a data-driven way.

On the other hand, both in the proposal given by Paindaveine and Van Bever \cite{PB13}  as well as ours, the size of the neighborhood is adaptive to each point of the space $x.$  In these cases the parameter $\beta$ indicates a fixed probability that will be considered surrounding each point $x,$  then the neighborhood is determined by  $\lambda_x^{\beta}.$  We give a heuristic strategy for choosing $\beta$ in a data driven way.

Different $\beta$ values reveal different features that are more or less adequate to describe and analyze a dataset. 

It is well known that as locality becomes extreme local depth measures lose the nature of centrality, as pointed out by Paindaveine and Van Bever \cite{PB13}. Therein they present a detailed characterization of the behavior of the extreme localization for the univariate case. Hence, since our proposal is constructed departing from this expression it can neither capture local centrality for $\beta$ close to zero.

For high $\beta$ values (typically greater than 0.5), the correlation between depth and local depth is high, that is, the structure described by these two measures is similar. In the remaining range of values, local depths describe adequately datasets typically with more than one subpopulation.

We suggest that instead of choosing a single value for $\beta$ it is useful to provide clusters of $\beta$ that highlight different features of the dataset.

To exemplify, we return the case of the first derivatives of the growth dataset that we showed in the Introduction. Now analyze the correlations between local depths at different locality levels. To do this, we take an equispaced grid for $\beta$ values between $0.1$ and 1 with step $0.05.$ To compute IDLD we generate 500 random directions following a standard brownian motion as in Cuevas and Fraiman \cite{CF09}. We perform the heatmap of the local depth correlations calculated at the different locality levels, see Figure \ref{heatmapderivgrowth}. The dendrogram shows that there are four groups. We can order the clusters in ascending order following the locality levels that constitute them. The first one is a singleton conformed by the smallest $\beta$ value. In the second cluster beta ranges between $0.15$ and $0.25$; in the third cluster between $0.30$ and $0.50$ and finally, the fourth cluster has the remaining values. On the one hand,  IDLD for $\beta$ belonging to the first cluster is not informative. On the other hand, for the last cluster, the neighborhoods contain more than half of the points, hence the information given by the local and global depth is practically the same. The second cluster is the one that best manages to capture the central curves for girls and boys.  The third cluster presents an interface between the second and the fourth cluster, where the information of the two subpopulations is present although somewhat blurred. 
 
To stress this point, if we consider the locally deepest $ 5\%$ of the data, for locality values smaller than $0.35,$ the proportion of boys and girls among them is the same that in the entire data set, while for locality levels between $0.4$ and $0.6,$ typically, the girl's rate grow. Finally, for locality levels greater than $0.6,$ all the local deepest data correspond to girls. In Figure \ref{centralesgrowth} (which appears in the Introduction) we can notice that the local depth curves ($\beta=0.2$) better capture the centers for boys and girls while the ones corresponding to the global depth only identify one center and does not highlight the most salient information of the data set. If instead of retaining the $5\%$ locally deepest curves we increase this proportion, in every case for locality levels in the second cluster the central curves reveal the peaks for boys and for girls, while for locality levels greater than $0.5$ the central curves do not reveal this pattern.

\begin{figure}[!t]
	\centering
	\includegraphics[width=0.6 \textwidth]{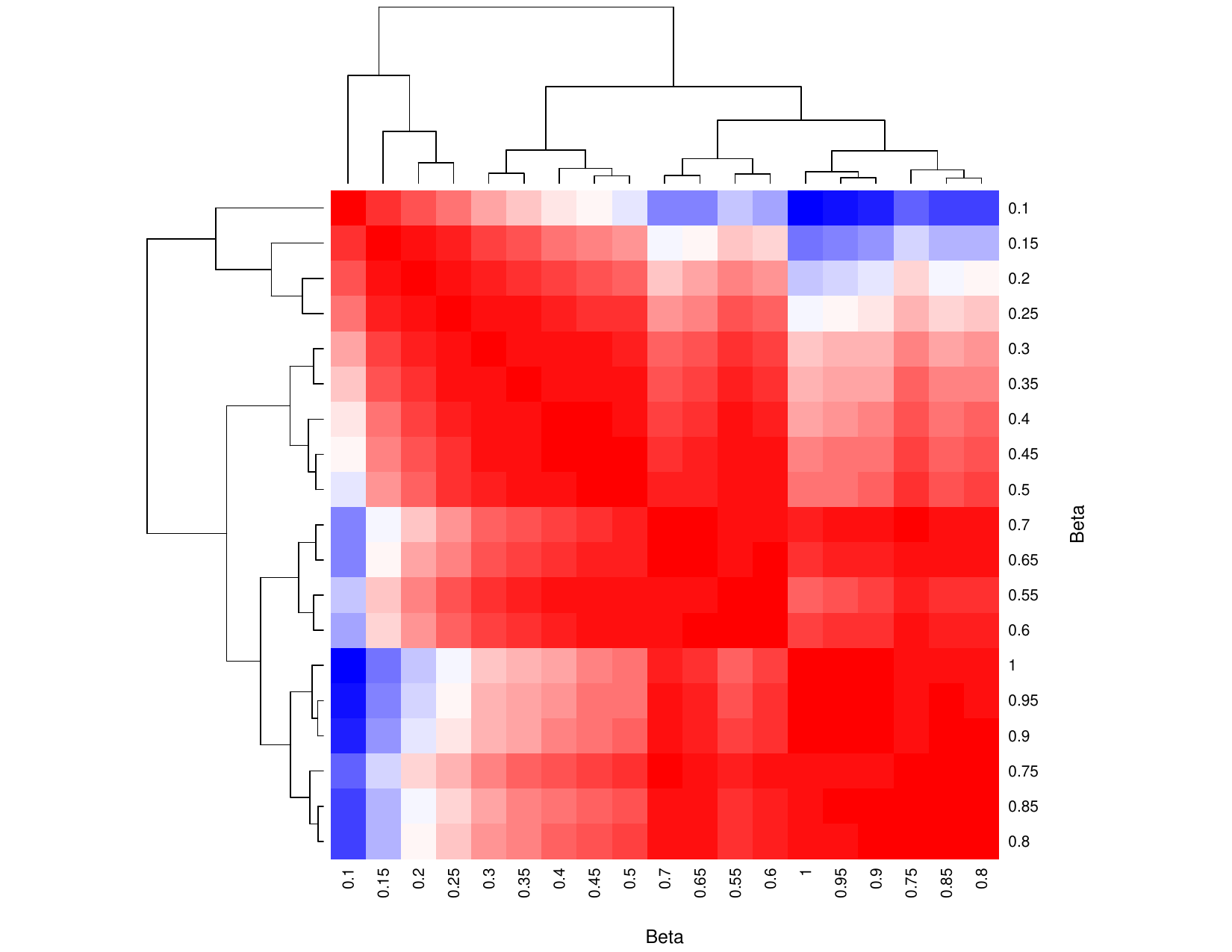}\par
	\caption{ Heatmap of the correlation matrix of the local depth for the derivative of the growth data set at different locality levels. } \label{heatmapderivgrowth}
\end{figure}

The pattern described in this example is typically observed in all the datasets that we had analyzed. The heatmap aids to analyze the matrix correlation of the local depth at different locally levels. Typically locality levels between $0.2$ and $0.4$ will capture valuable information on the data.


\section{Simulations and real data examples} \label{RDE}

 In this section, we analyze the performance of the IDLD for classification and descriptive statistics. Even though we develop the theoretical part in the most general case, for the numerical part we concentrate on functional data, specifically in the multivariate case, where there is no other local depth available. 
 
 \subsection{Descriptive Statistic. The Gross Domestic Product data set}

  In this example, we consider a two-dimensional functional data set, which is available at the International Monetary Fund website \url{https://www.imf.org/external/datamapper/NGDPD@WEO/OEMDC/ADVEC/WEOWORLD}. The database contains the annual series of Growth Domestic Product (GDP) per capita and the inflation rate (interannual variation of the consumer price index with the base year 2016), for 101 countries from 1990 to 2016. The relationship between GDP and inflation has been one of the most widely researched topics in macroeconomics, where not only the marginal behavior of each of these variables provides valuable information, but also the compound analysis between them.  Although the database has measurements since 1980, many countries did not report their inflation rates during the 1980s, hence we analyze the series beginning in 1990. The data corresponding to each of the variables are measured in different units (which differ in their order of magnitude, see Figure \ref{FMIdata2D}), therefore both series were standardized to have comparable scales by dividing the GDP by 400. The objective of this analysis is to adequately describe the data set that is being analyzed. At first glance, the univariate GDP and inflation charts do not present any clear pattern.

 \begin{figure}[!t]
 	\centering
 	\includegraphics[width=0.5 \textwidth]{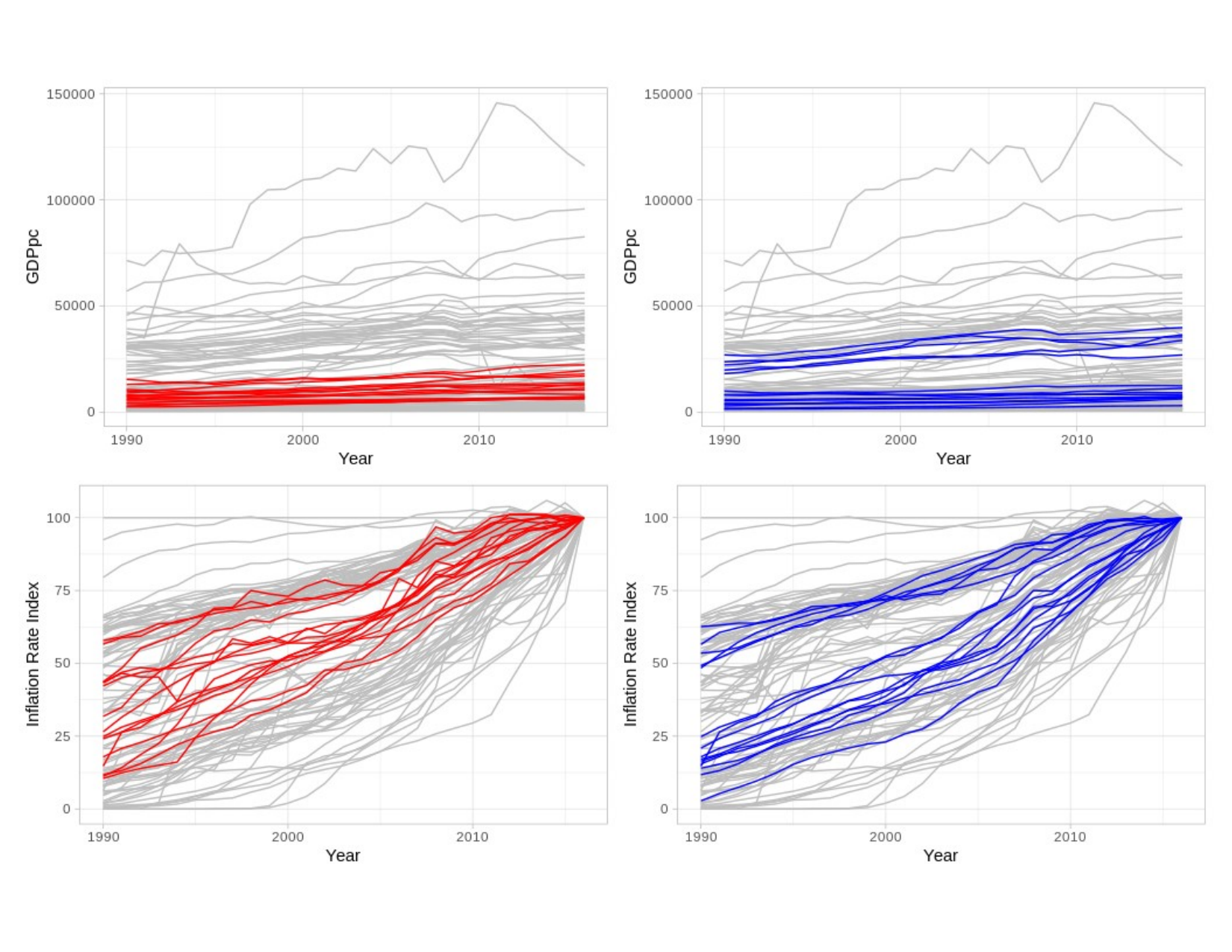}\par
 	\caption{The upper panels show the trajectories for the GDP while the lower panels exhibit the trajectories for the inflation rate. The red curves highlight the $10\%$ deepest curves given by the global depth while the blue ones highlight the $10\%$ deepest curves given by the local depth at locality level $\beta=0.4.$ } \label{FMIdata2D}
 \end{figure}

We compute  the IDLD at locality level $\beta=0.10,0.15,0.20,\dots,0.95,1.$ 
To compute 500 random directions are generated as follows. We assume that each functional coordinate of  the data belongs to $L_2(T)$ where $T$ is the period of time studied. Then, we consider the product space with the usual inner product. The trajectories are generated following a standard brownian motion distribution. 

In Figure \ref{heatmapfmi} (a) the heatmap of the correlation between the IDLD at different locality levels show a three-group structure. One group is a singleton, $\beta=0.10,$ where no structure can be appreciated since the local depths of many points are high and similar. The larger cluster contains the locality levels higher than $0.45,$ which present a high correlation with the global depth. Finally,  the remaining values constitute a group, which will be analyzed in detail.

\begin{figure}[!t]
	\centering
	\subfigure[Heatmap of the correlation matrix of the local depth for the Gross Domestic Product data set at different locality levels.]{\includegraphics[width=50mm]{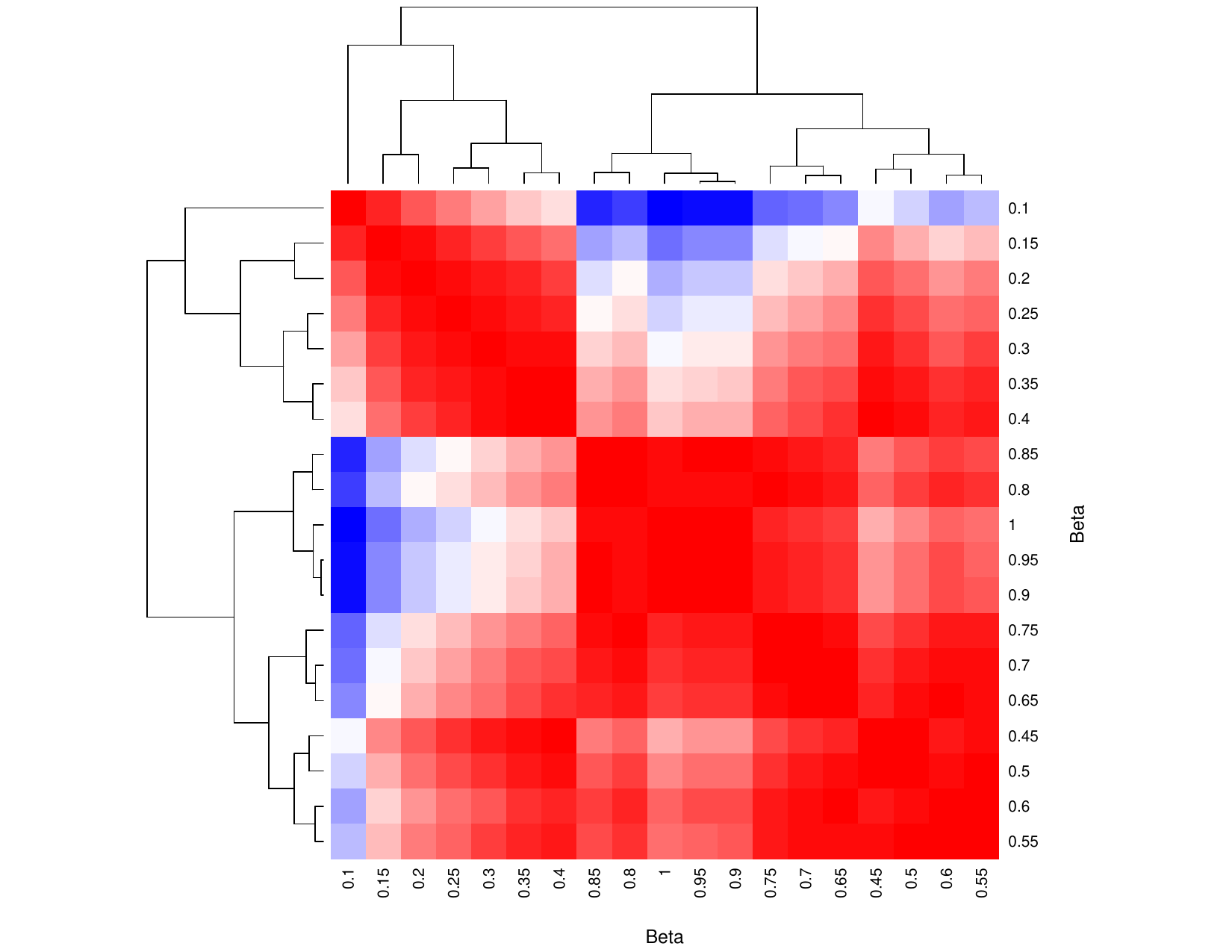}} \hspace{10mm}
	\subfigure[ Correlation matrix used for the selection of the locality parameter.]{\includegraphics[width=50mm]{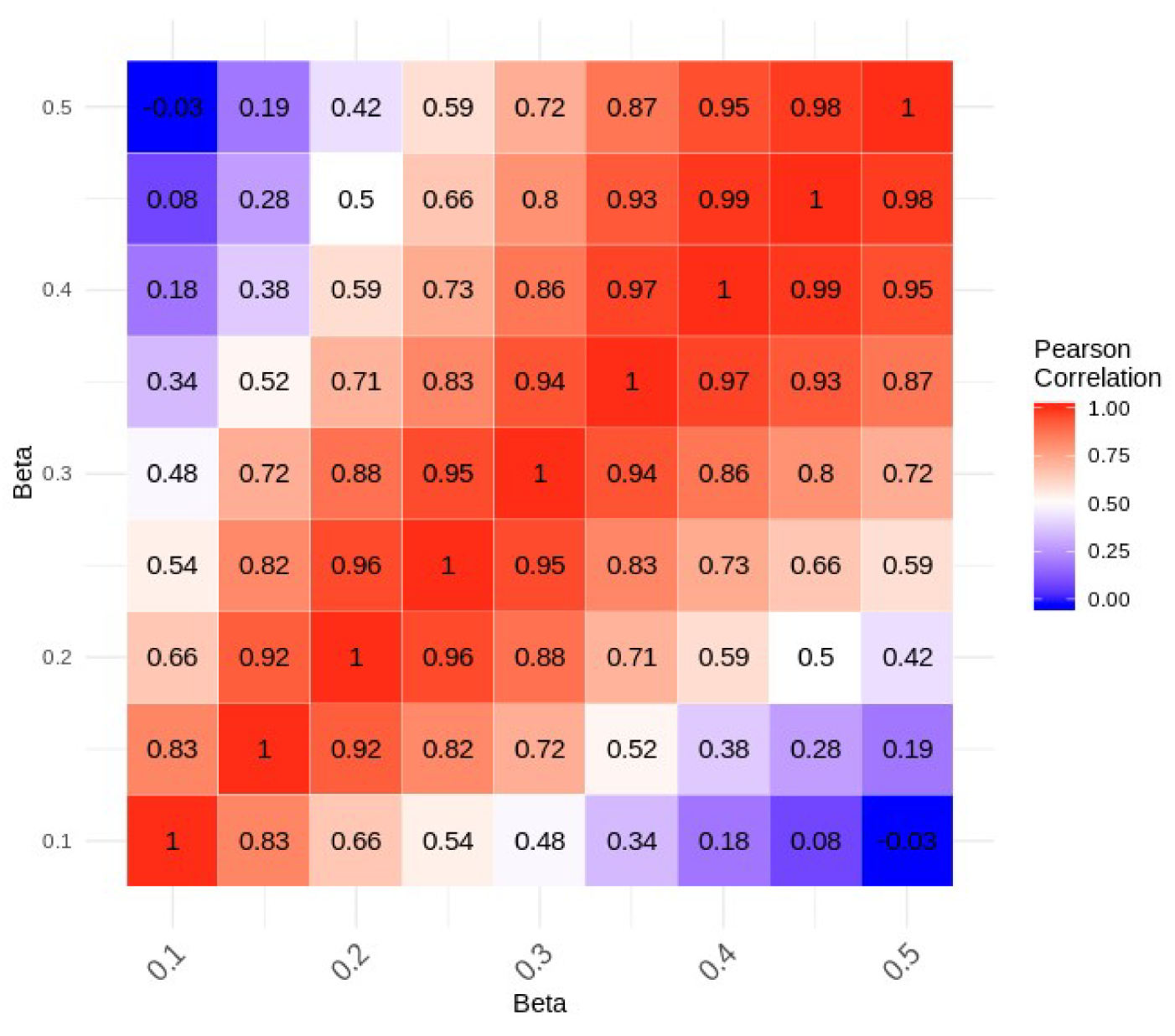}}
	\caption{ Local depth correlation analysis for FMI data.} \label{heatmapfmi}
\end{figure}

%

 Then we study the IDLD at locality level $\beta= 0.1, 0.2, \dots,0.5.$  To carry out a deeper analysis we proceed as follows. Given two locality levels $\beta_1$ and $\beta_2,$  consider the set of observations that correspond to $50\%$ of the local deepest data for at least one of the locality level, denote  $C_{\beta_1 \bigcup \beta_2 }$ to this set. Calculate the correlation matrix of the rankings of the $ILDL^{\beta_1}(C_{\beta_1 \bigcup \beta_2 }, P_n)$ and $ILDL^{\beta_2}(C_{\beta_1 \bigcup \beta_2 }, P_n),$ where $P_n$ is the distribution of the complete data set. These results appear In  Figure \ref{heatmapfmi} (b), clearly there are regions of locality level where the correlation is high, the area that stands out the most is $\beta \in [0.3, 0.5].$ The fact that the correlation, in a range of $\beta's,$ is high and stable suggests that a clear structure is being captured, otherwise, the spurious structures would be described. For this reason we chose $\beta = 0.4.$

  \begin{figure}[!t]
 	\centering
 	\includegraphics[width=0.6 \textwidth]{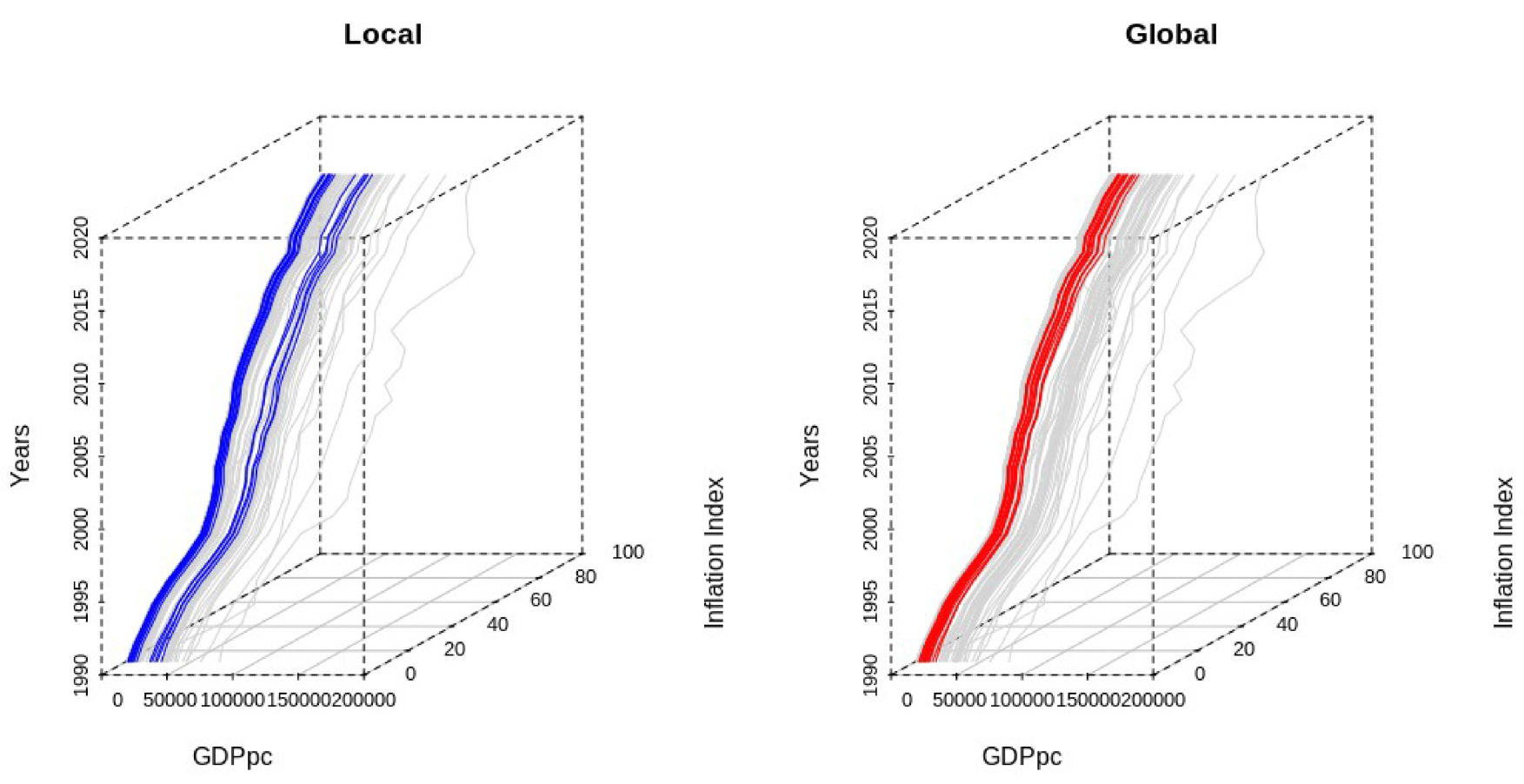}\par
 	\caption{Three dimensional representation of the trajectories of GDP and inflation, the  red curves represent the $10\%$ global deepest observations while the blue ones stand for the $10\%$ locally deepest observations at locality level $\beta=0.4$.} \label{FMI3D}
 \end{figure}

 To deepen this exercise we will analyze the $10\%$ of the deepest data, locally and globally. In Figures \ref{FMIdata2D} and \ref{FMI3D} it can be appreciated that not only the deepest observations are different but also in the local case a bimodal structure is clearly described, this pattern is described by both variables. Moreover, if we analyze the first two principal components, clearly one mode is dominated by emergent countries (for instance Paraguay and Bolivia) while the other mode is dominated by developed countries (for example UK and Spain). On the other hand, among the deepest curves for the global depth, only  emergent countries appear (see Figure \ref{pcaFMI}).

 \begin{figure}[!t]
 \centering
  \includegraphics[width=0.6 \textwidth]{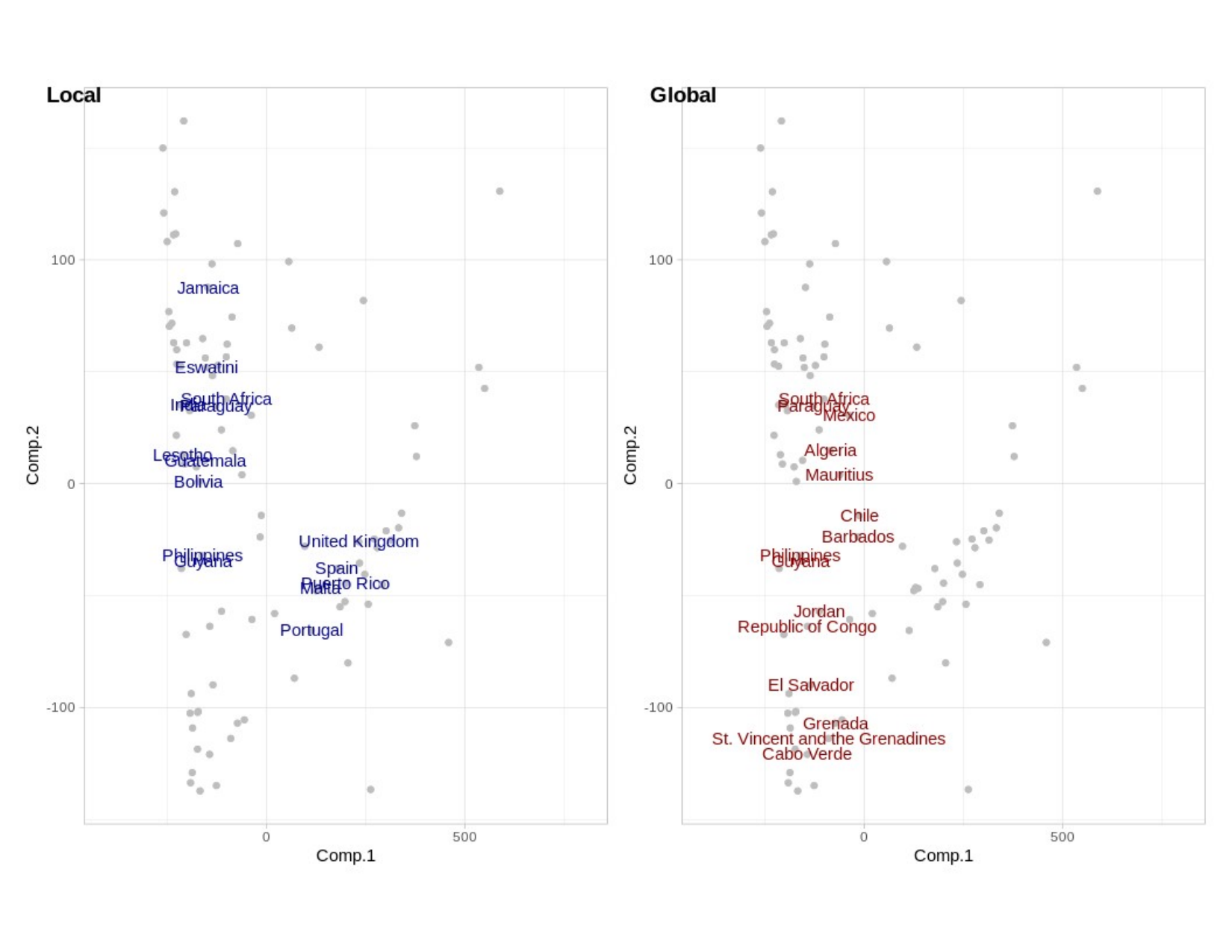}\par
 \caption{Scatter plot of the two first principal components of the data sets, the red dots represent the $10\%$ global deepest observations while the blue ones stand for the $10\%$ locally deepest observations at locality level $\beta=0.4$.} \label{pcaFMI}
 \end{figure}

\subsection{Classification}
 
 In this section we perform a simulation study as well as analyze two well-known data sets and show that the IDLD improves the max-depth classification. 

\subsubsection{Simulation}

Five models are studied for classification purpose. 
In every case we consider bivariate functional data. In all the cases, each functional  curves are obtained from the process $m_{\cdot jk}(t)=m_{jk}(t)+\epsilon_{\cdot jk}(t),$ where $m_{jk}$ is the mean function of the class $j=1,2$ and for coordinate $k=1,2$ and $\epsilon_{\cdot jk}$ is a gaussian process with zero mean and covariance operator $cov\left( \epsilon_{\cdot j}(s),\epsilon_{\cdot j}(t)\right)=\theta_j \exp\left(\left| s-t\right| /0.3 \right),$ with $\theta_2=\theta_1/2.$  The functions are generated in the interval $\left[ 0,1\right]$ on a grid of $71$ equidistant points. The models are inspired in those of Cuesta-Albertos et al. \cite{CFD17}.

\begin{itemize}
	\item \textit{Model 1:} The mean of Population 1 is $m_{1}(t)=\left(m_{11}(t),m_{12}(t)\right) =\left( 30(1-t)t^{1.1}, 3 \sin(2\pi t)\right)$ and for Population 2 $m_{2}(t)=\left(m_{21}(t),m_{22}(t)\right) =\left( 30(1-t)^{1.1}t, \sin(2\pi t)\right).$ Each coordinate of Population 1 has error process by $\epsilon_{\cdot 1k}$ and for Population 2 $\epsilon_{\cdot k},$  where  $\theta_1=2.$    The sample size of the training and validation samples for each population is 100.
	\item \textit{Model 2:} Population 1 and 2 follow the same distribution as in \textit{Model 1}, but the sample size of the training and validation sets for Population 1 is 70, while for Population 2 is 100 observations.
	\item \textit{Model 3:} The mean of Population 1 is the same as in \textit{Model 1}, $m_1(t),$ while  Population 2  is a mixture of two subgroups one with mean  $m_{20}(t)=\left( 35 (1-t)^{1.1}t, \sin(2\pi t)\right)$ and the other one with mean $m_{20}(t)=\left( 25 (1-t)^{1.1}t, 5 \sin(2\pi t)\right),$ the mixing proportion is $0.5.$ The error process for  both coordinates of Population 1 is generated with $\theta_1=0.5,$ and both coordinates of Population 2 with $\theta2.$  The sample sizes for the training and validation samples for each population is 50 observations.
	\item \textit{Model 4:} Population 1 is a mixture of two groups with means $m10(t) = (22(1-t)(t^k), 7 \sin(2\pi t))$ and $m11(t) = (30(1-t)(t^k), 3 \sin(2 \pi t)).$ Population 2 is a mixture of two groups with means $m20(t) = (34((1-t)^k)t, sin(2 \pi t))$ and $m21(t) = cbind(26((1-t)^k)t, 5 \sin(2 \pi t)).$ In both cases the mixing proportions is $0.5.$ The error process for both coordinates of Population 1 is generated with $\theta_1=0.5,$ and both coordinates of Population 2 with $\theta2.$  The sample sizes for the training and validation samples for each population is 50 observations.
	\item \textit{Model 5:} Has the same distribution as \textit{Model 4}, but in this case each subgroup is a group.  
\end{itemize}

The simulation results are based on 200 independent replicates. The observations are assigned to the 
Population with highest local depth  for fix  $\beta=0.05,0.1,0.2,\dots,0.9,1$ and also for a 5-fold cross-validated  $\beta.$  For each of them we analyze missclasification error rate at different locality levels.

\begin{table}[ht]
	\centering
	\begin{tabular}{r|rrrrrrrrrrr|r}
		\hline
		 & \multicolumn{12}{c}{$\beta$}   \\ 
		\hline
		 & 0.05 & 0.1 & 0.2 & 0.3 & 0.4 & 0.5 & 0.6 & 0.7 & 0.8 & 0.9 & 1 & cv \\ 
		\hline
		M1 & 0.21 & \textbf{0.16} & 0.18 & 0.19 & 0.21 & 0.22 & 0.23 & 0.24 & 0.24 & 0.25 & 0.26 & 0.17 \\ 
		M2 & 0.42 & 0.23 & \textbf{0.21} & 0.25 & 0.29 & 0.31 & 0.33 & 0.34 & 0.35 & 0.36 & 0.37 & 0.23 \\ 
		M3 & 0.35 & \textbf{0.22} & \textbf{0.22} & 0.26 & 0.30 & 0.35 & 0.39 & 0.42 & 0.43 & 0.44 & 0.44 & \textbf{0.22} \\ 
		M4 & 0.33 & \textbf{0.25} & 0.27 & 0.29 & 0.31 & 0.33 & 0.34 & 0.35 & 0.35 & 0.35 & 0.35 & 0.26 \\ 
		M5 & 0.25 & \textbf{0.11} & 0.14 & 0.16 & 0.19 & 0.20 & 0.22 & 0.23 & 0.25 & 0.26 & 0.26 & 0.12 \\ 
		\hline
	\end{tabular}
    \caption{Mean misclassification error rate for Models 1-5  for fixed $\beta$ and cross validated $\beta$. The best result appears in boldface.}
    \label{tablasimul}
\end{table}


\begin{figure}[!t]
	\centering
	\includegraphics[width=0.7 \textwidth]{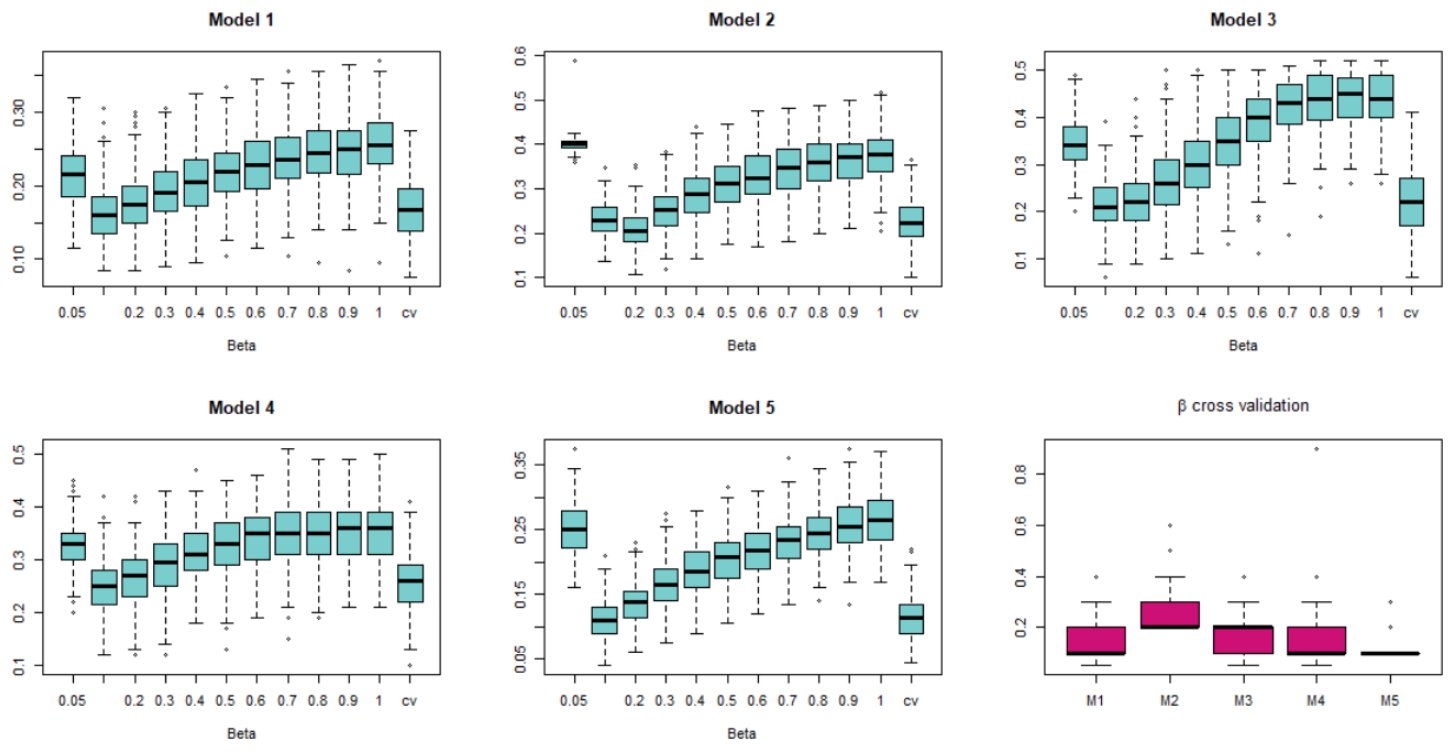}\par
	\caption{Boxplots of the misclassification rate for Models 1 to 5 (upper panels and lower  right-hand side and central panels ). Boxplots of $\beta$ selected by cross validation in Models 1 to 5 (lower  felt-hand side panel).}
	\label{bpsimul}
\end{figure}

In Table \ref{tablasimul} the mean misclassification error for IDLD at different locality levels is presented. The benchmark is given when $\beta=1,$ which is the IDD, in every case these values are among the highest. The last column presents the misclassification rate for $\beta$ obtained via cross-validation, these results are very close to the best ones. In every case, for the smallest locality level considered the IDLD is not successful, since it cannot identify the central observations of the sample. In all the cases the best performance is attained either for $\beta=0.1$ or $0.2.$ Typically the misclassification level increases as $\beta$ increases. The misclassification rate for Model 1 is smaller than for Model 2.  For Models 3 and 4, where at least one of the groups is a mixture of two distributions, for $\beta$ greater than $0.6$ the results are practically the same because the local features are not captured. In Model 5, for $\beta=0.1$ and $0.2$ it correctly identifies the four groups. These features are also captured by the boxplots in Figure \ref{bpsimul}.
The performance for $\beta$ selected by cross-validation is similar to the performance given in the best case for the fixed locality parameter. In almost every case $\beta$ is smaller than $0.3$ and typically is either $0.1$ or $0.2,$  these results are consistent with the ones given for fixed locality level.

 The R code is available at \url{https://github.com/lfernandezpiana/lidDepth/tree/master/Simulations}.

 \subsubsection{The Growth data set}

We return to the  Growth dataset, now we will analyze it from the classification perspective. 
To begin with, our aim is to classify the growth curves for a sample of boys and girls, we start working with the trajectories. Since there is no learning and test sample we shall proceed following the ideas from Sguera et al. \cite{SGL214}. A thousand replications are made with the following scheme. On each replicate the training sample is randomly chosen, respecting the proportions of girls and boys in the data set, 40 curves correspond to girls and 30 to boys. The remaining observations are classified, using the criterion of maximum local depth, taking into account different locality levels, $\beta = 0.1, 0.2, \dots, 0.9, 1,$ where $\beta = 1$ is the global depth.
It is important to highlight that we consider the same locality level for boys and girls for the sake of simplicity since the IDLD is stable in certain  $\beta$ intervals.

The left-hand side panel in Figure \ref{misclassgrowth} presents the boxplot of the misclassification rate, for each locality parameter $\beta.$ Clearly, the misclassification depends on the locality parameter, and in this case, the optimal locality level is $\beta = 0.2,$ where with median (respectively mean)  misclassification rate is $8.7\%$ (respectively $10.78\%$). If instead of the IDLD at locality level $0.2$ we had considered the global depth measure the median (respectively mean) misclassification rate is   $18.06\%$ (respectively $17.39\%,$) hence the relative improvement is $38\%$ (respectively $52\%$). It is important to highlight that these results are similar to those obtained by Sguera et al. \cite{SGL214}.

\begin{figure}[!t]
\centering
 \includegraphics[width=0.6 \textwidth]{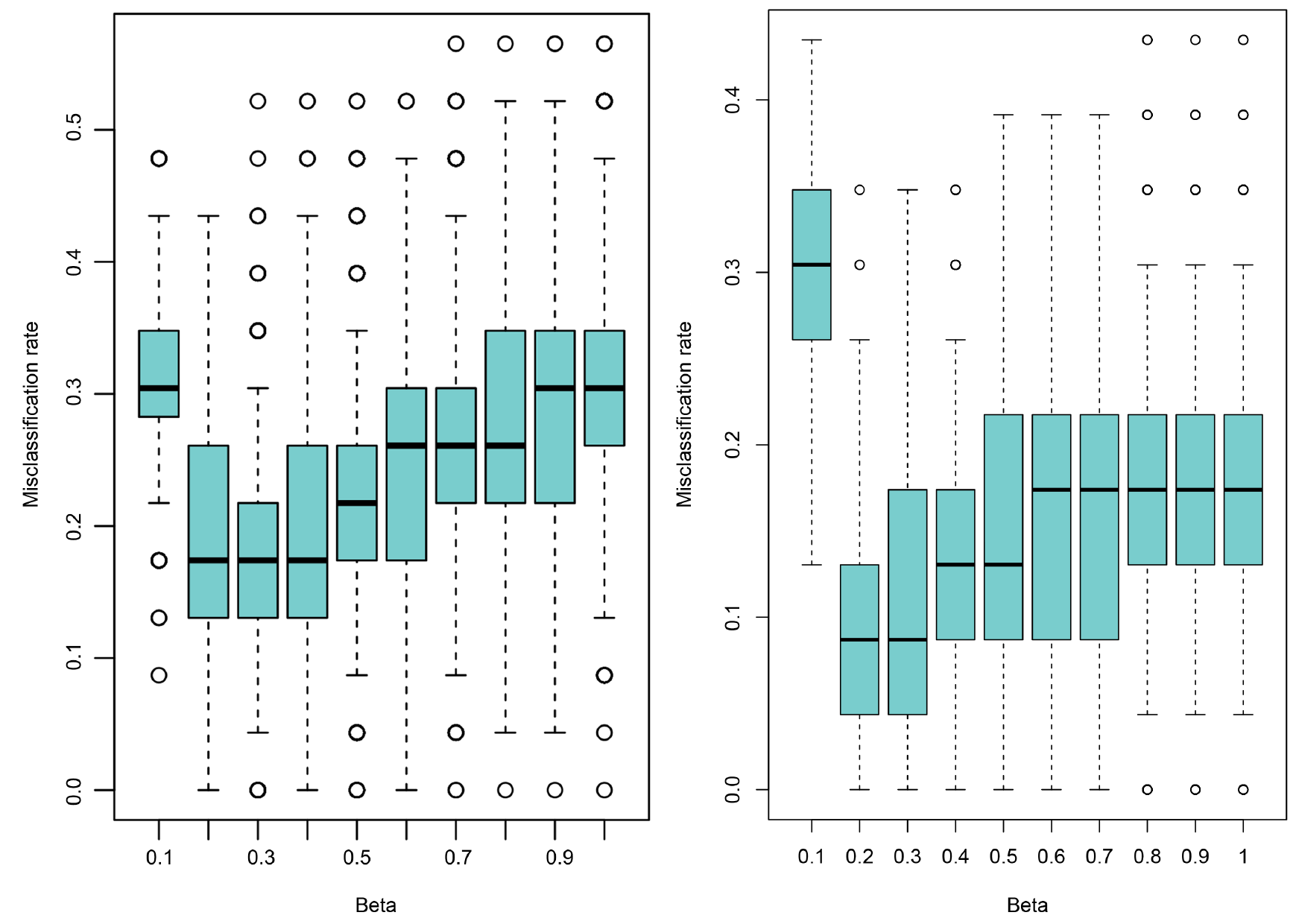}\par
\caption{Boxplots for the misclassification rates at different locality levels for the trajectories (left hand side panel) and for the derivatives (right hand side panel) of the growth dataset.} \label{misclassgrowth}
\end{figure}

Finally, we repeat the exercise with the derivatives of the growth curves. The right-hand side in Figure \ref{misclassgrowth} shows the boxplots corresponding to the misclassification rate for the velocity of the growth curves are shown. Once again, we can conclude that the classification is better when considering local depths with locality levels between $0.2$ and $0.4.$ Overall, better classification rates are achieved considering the trajectories instead of the derivatives.  


\subsubsection{The Character Trajectory data set}

We analyze the bivariate functional writing data which is a part of the Character Trajectory data set that is available at the UCI machine learning repository (Bache and Lichman \cite{BL13}). The data measure the horizontal and vertical coordinates of a pen tip while a person writes either the letter `s',   `u' or  `v', there are $n = 133$ observations for the letter `s' and $n=131$ observations for the letter `u' and $n=155$ for the letter `v'. The observed data were first interpolated to obtain $T = 100$ equally spaced time points. The trajectories for the letter `s', `u' and `v' are plot in Figure \ref{letrasuv} (a),  (b) and (c) respectively.
\begin{figure}[!t]
	\centering
	\includegraphics[width=0.6 \textwidth]{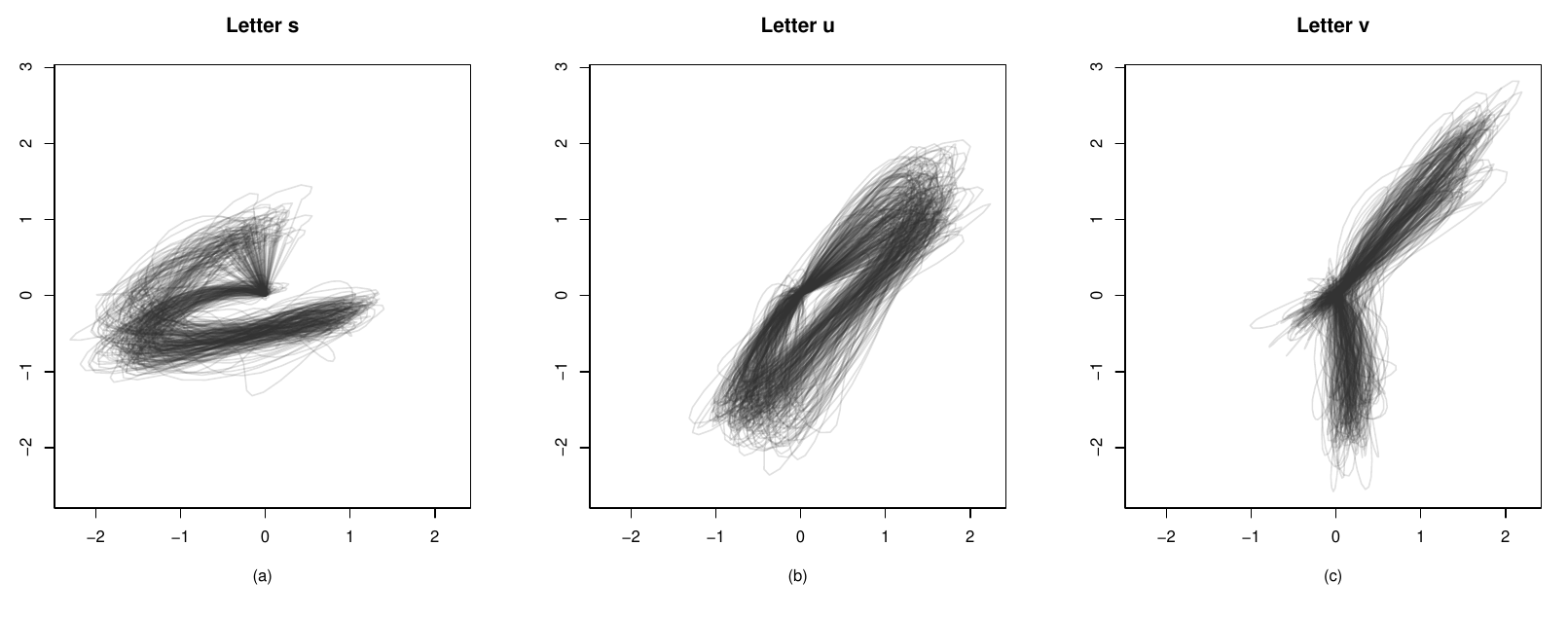}\par
	\caption{ $X$ and $Y$ position for letters `s', `u' and `v'.} \label{letrasuv}
\end{figure}

Our aim is to classify the letter trajectories in the correct category. Since there is no learning and test sample we shall proceed as in the previous example. Five thousand replications are made with the following scheme. On each replicate, the training sample is selected following a probability proportional to the size of each letter sampling scheme. The remaining observations are classified, using the criterion of maximum local depth, taking into account different locality levels, $\beta = 0.05,0.10, 0.15, \dots, 0.95, 1,$ where $\beta =1$ is the global depth.


For $\beta = 0.05$ the misclassification rate is much higher than for any other $\beta$ value. The lower misclassification rates are attained for $\beta$  between $0.1$ and $0.2,$ which remain moderate for $\beta$ between $0.25$ and $0.45.$ Finally, for locality levels above $0.5,$ the malclassification rates are higher. The lowest mean misclassification rate is $3.5\%$ for $\beta=0.15,$ and the median is $3.8\%.,$ (see Figure \ref{misclassuvs})  
\begin{figure}[!t]
\centering
\includegraphics[width=0.6 \textwidth]{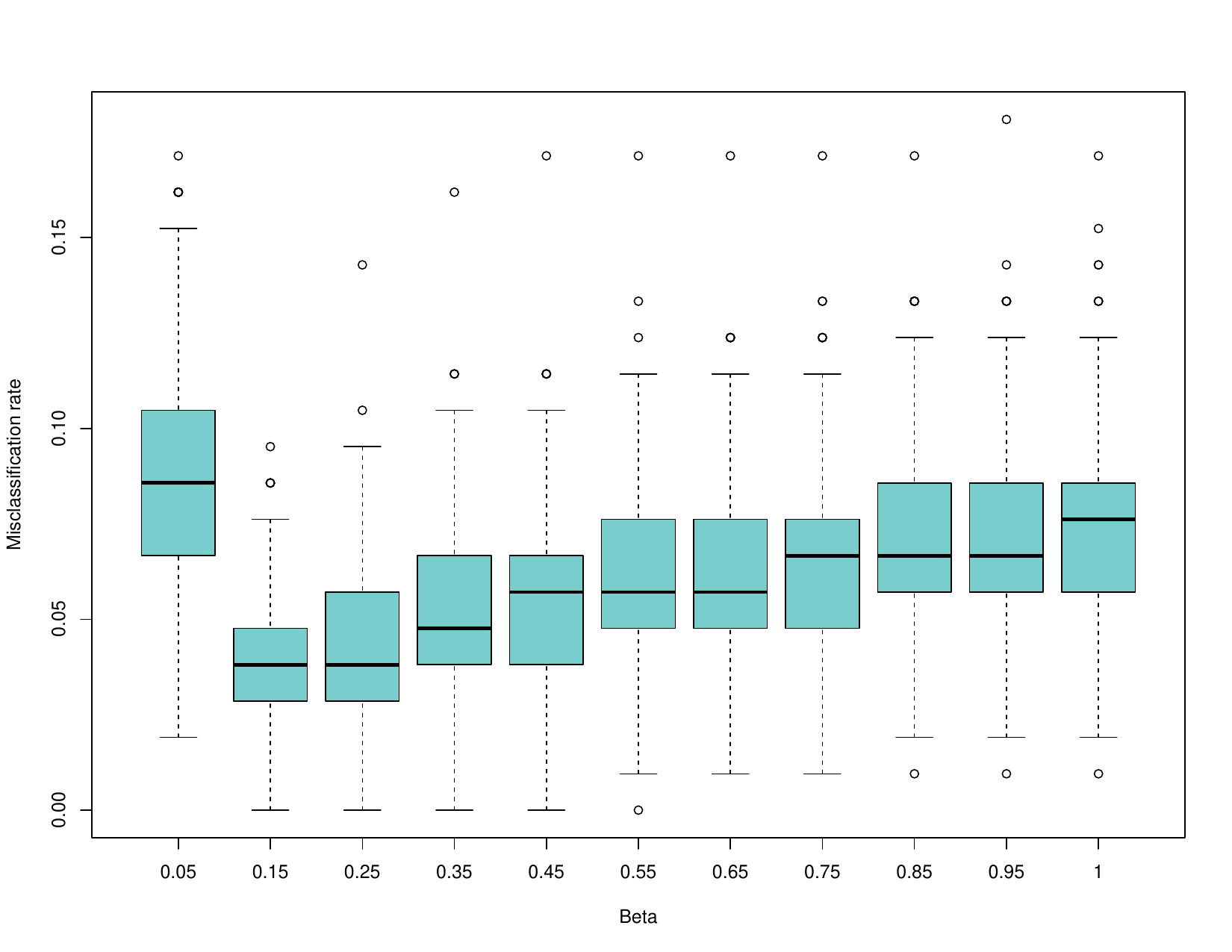}\par
\caption{ The boxplots present the misclassification rate at different locality levels for letters `s', `u' and `v'.} \label{misclassuvs}
\end{figure}

\section{Concluding Remarks}

In this paper, we introduced a local depth measure, IDLD, suitable for data in a general Banach space with low computational burden. It is an exploratory data analysis tool, which can be used in any statistical procedure that seeks to study local phenomena.
From the theoretical perspective, local depths are expected to be generalizations of a global depth measure. Our proposal has this property. Additionally, they are expected to inherit good properties from global depths, this point has been deeply analyzed in this work. Strong consistency results for the local depth and local depth regions have been proved.

From a practical point of view, we explored the use of local depth measures as a powerful tool for describing datasets, specifically if current multimodal distributions and also their use in classification problems show better performance than the use of global depths. 

A natural extension of this work is to explore classification strategies similar to $DD^G$ classifiers based on local depth similar to those proposed in Cuesta-Albertos et al. \cite{CFD17}, but it goes beyond the scope of this paper.

\section{Appendix}
\subsection{Proofs of properties \textbf{P.1-6. }}
\setcounter{equation}{0}

\par

\begin{proof}[ Proof: \textbf{P.1.} affine - invariance]
	Let $A$ be an orthogonal transformation and $Q_1$ denotes the probability measure of $h=f \circ A.$
	\begin{equation*}
	IDLD^{\beta}(Ax,P_{AX}) = \int LD_{S}^{\beta}(f(Ax),P_{f(AX)}) Q(f)  = \int LD_{S}^{\beta}(h(x),P_{h(X)}) Q_1(f).
	\end{equation*}
	Since $Q$ is the uniform measure on the unit sphere then $Q=Q_1,$ which concludes what we wanted to prove.
	
\end{proof}

\begin{proof}[Proof: \textbf{P.2.} maximality at the center]
	It is enough to show that for each $f \in \mathbb{E}'$ and $\beta' \in (0,\beta]$
	\begin{equation*}
	LD_S^{\beta}(f(\theta),P_f)  = \frac{2}{\beta'^2}
	\left[ F_f(f( \theta)) - F_f\left(  f(\theta) - \lambda_{f(\theta)}^{\beta'}\right)  \right] \left[ F_f\left(  f(\theta) + \lambda_{f(\theta)}^{\beta'}\right)  - F_f(f( \theta)) \right] = \frac{1}{2}.
	\end{equation*}
	Since the bound, $\frac{1}{2}$ is attained,  we have that $LD_S^{\beta}(f(\theta),P_f)$ has a global maximum at $\theta.$ Then, is clear that $IDLD^{\beta}(\theta,P),$ also is a global maximum.
\end{proof}

\begin{proof}[Proof: Proposition 1]
	Let $\beta \in (0,1],$ $X$ is $C$-symmetric about $\theta$ if for every  not null $f \in \mathbb{E}',$ $f(X)$ is symmetric about $f(\theta).$
	Then, for every $0<\beta' \leq \beta,$
	
	\begin{equation*}
	\beta'  = F_f \left(  f(\theta) + \lambda^{\beta'}_{f({\theta})}\right)  - F_f \left(  f(\theta) - \lambda^{\beta'}_{f({\theta})}\right)  = 2 \left( F_f \left(  f(\theta) + \lambda^{\beta'}_{f({\theta})}\right)  - F_f \left(  f(\theta) \right)  \right).
	\end{equation*}

	Finally,
	\begin{equation*}
	\frac{\beta'}{2} = F_f \left(  f(\theta) + \lambda^{\beta'}_{f({\theta})}\right)  - F_f \left(  f(\theta) \right) ,
	\end{equation*}
	which is what we wanted to show.
\end{proof}

\begin{proof}[Proof: Proposition 2]
	First note that, given  $f \in \mathbb{E}',$  $x \in \mathbb{E}$ and $\beta \in (0,1].$
	\begin{gather*}
	\beta  = F_f \left( f(x) + \lambda^{\beta}_{f({x})} \right) - F_f \left( f(x) - \lambda^{\beta}_{f({x})} \right) \\
	F_f \left( f(x) + \lambda^{\beta}_{f({x})} \right) - F_f ( f(x) )  = \beta - \left( F_f ( f(x) ) - F_f \left( f(x) - \lambda^{\beta}_{f({x})} \right) \right).
	\end{gather*}
	From the definition of $LD^{\beta}_S(x, P_1)$  is clear that,
	\begin{equation*}
	LD_{S}^{\beta}(f(x),P_f) = \frac{2}{\beta^2} \left[\beta - \left( F_f ( f(x) ) - F_f \left( f(x) - \lambda^{\beta}_{f({x})} \right) \right) \right]
	\left[ F_f \left( f(x) + \lambda^{\beta}_{f({x})} \right) - F_f ( f(x) ) \right].
	\end{equation*}
	Let $h: [0,\beta] \rightarrow \mathbb{R},$  $h(t)\ = \frac{2}{\beta^2} (\beta - t)t,$ attains a global maximum at
	$t = \frac{\beta}{2},$ hence $LD_{S}^{\beta}$  attains its maximum when
	$F_f \left( f(x) + \lambda^{\beta}_{f({x})} \right) - F_f ( f(x) ) = \frac{\beta}{2}.$ If this property is satisfied for every $0< \beta' \leq \beta,$ then, $x$ is a
	$\beta$-symmetry  point of $f(X).$
	
	Then, let $0< \beta' \leq \beta,$
	
	\begin{equation*}
	\frac{1}{2} =IDLD^{\beta'}(x_0,P) = \int LD_{S}^{\beta'}(f(x_0),P_f) dQ(f) \Rightarrow 0 = \int \left(\frac{1}{2} - LD_{S}^{\beta'}(f(x_0),P_f) \right) dQ(f)
	\end{equation*}
	For every $f \in  \mathbb{E}',$  $LD_{S}^{\beta}(f(x_0),P_f)$ is bounded by $\frac{1}{2}.$ Hence,
	$LD_{S}^{\beta}(f(x_0),P_f) = \frac{1}{2}$ $Q-a.s.$  From the first part of the proof we know that
	$f(x_0)$ is a  $\beta$-symmetry point of $f(X),$ hence $x_0$ is a  $\beta$-symmetry point of $X.$
\end{proof}

We now focus on the proof of \textbf{P.3} that  establishes that the local depth is monotone relative to the deepest point. We first show that this results holds if the distribution is unimodal. We begin by proving several auxiliary results that we will need to prove \textbf{P. 3.}

\begin{lemma}[\textbf{P.3.}] \label{lema1Propiedad3}
	Let $Z$  be an absolutely continuous, symmetric and unimodal about $t=0$ random variable with cumulative distribution function  $F.$
	Let $\beta \in (0,1],$ define the functions,
	\begin{itemize}
		\item $U(t) = F(t + \lambda_{t}^{\beta}) - F(t).$
		\item $V(t) = F(t) - F(t - \lambda_{t}^{\beta}).$
	\end{itemize}
	Then, for every  $t \in \mathbb{R}$ we have that:
	\begin{enumerate}[a)]
		\item If $t \geq 0$ $\Rightarrow$ $U(t) \leq \beta/2 \leq V(t).$
		\item If $t \leq 0$ $\Rightarrow$ $V(t) \leq \beta/2 \leq U(t).$
	\end{enumerate}
\end{lemma}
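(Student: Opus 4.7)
The plan is to exploit the identity
\[
U(t) + V(t) = F(t+\lambda_t^\beta) - F(t-\lambda_t^\beta) = \beta,
\]
which holds by the very definition of $\lambda_t^\beta$ (and is attained because $F$ is absolutely continuous). Because of this, both conclusions (a) and (b) reduce to a single inequality at a time: for $t\geq 0$ it suffices to show $U(t)\leq V(t)$, and then $U(t)\leq \beta/2\leq V(t)$ follows automatically.

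Next I would rewrite the two quantities as integrals of the density $f=F'$ on a common interval by the substitutions $s = t+u$ in $U(t)$ and $s=t-u$ in $V(t)$:
\[
U(t) = \int_0^{\lambda_t^\beta} f(t+u)\,du, \qquad V(t) = \int_0^{\lambda_t^\beta} f(t-u)\,du.
\]
Thus the inequality $U(t)\leq V(t)$ would follow from the pointwise bound $f(t+u)\leq f(t-u)$ for every $u\in[0,\lambda_t^\beta]$. I would prove this pointwise bound by splitting into two cases depending on the sign of $t-u$. If $t-u\geq 0$, then $0\leq t-u\leq t+u$, and unimodality on $[0,\infty)$ (non-increasing there) gives $f(t-u)\geq f(t+u)$. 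If $t-u<0$, then by symmetry $f(t-u)=f(u-t)$, and since $t\geq 0$ we have $0\leq u-t\leq u+t$, so unimodality again yields $f(u-t)\geq f(u+t)=f(t+u)$. In both subcases the desired inequality holds, giving $V(t)\geq U(t)$ and hence (a).

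For part (b), the cleanest route is symmetry: using $F(-a)=1-F(a)$, I would check that $\lambda_{-t}^\beta = \lambda_t^\beta$ and that $U(t)=V(-t)$, $V(t)=U(-t)$. Applying part (a) to $-t\geq 0$ then immediately yields $V(t)\leq \beta/2\leq U(t)$.

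The only delicate point is the cross-case $t-u<0<t+u$, where one cannot simply invoke monotonicity on a single branch of $f$; this is where both symmetry and unimodality must be combined. The rest is bookkeeping around the identity $U+V=\beta$ and the integral representations.
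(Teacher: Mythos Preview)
Your proof is correct. The overall strategy coincides with the paper's: both arguments reduce part (a) to the inequality $U(t)\leq V(t)$ and then invoke $U(t)+V(t)=\beta$, and both obtain (b) from (a) by symmetry (the paper phrases this via the variable $-X$, you via the relation $F(-a)=1-F(a)$; these are equivalent).

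The execution differs, and yours is the cleaner one. The paper compares $U(t)$ and $V(t)$ by splitting into the two interval-level cases $-t<t-\lambda_t^\beta$ and $t-\lambda_t^\beta<-t$, and in each case bounds the integrals $\int_t^{t+\lambda_t^\beta} f$ and $\int_{t-\lambda_t^\beta}^t f$ using crude $\min$/$\max$ bounds of $f$ on various subintervals, with a further decomposition of the intervals in the second case. Your substitution $s=t\pm u$ aligns the two integrals over the common interval $[0,\lambda_t^\beta]$ and reduces everything to the single pointwise inequality $f(t+u)\leq f(t-u)$ for $t,u\geq 0$, which you dispatch by the natural two-case split on the sign of $t-u$. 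This avoids the interval bookkeeping and the $\min$/$\max$ bounds entirely; conceptually the two case splits correspond to one another, but the pointwise formulation is more transparent and shorter.
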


\begin{proof}
	a) It is clear that if $t=0$ then by symmetry the equality is attained.
	
	Let $t>0$ and $g_Z$ be the density function of $Z.$
	There are two possible cases to analyze:
	
	i) If $-t < t-\lambda_{t}^{\beta}:$
	
	Since $g_Z(s)$ decreases on $(0,+\infty),$ we have that
	$$ \min_{0 \leq s \leq t} g_Z(s) \geq \max_{t \leq s \leq t + \lambda_{t}^{\beta}} g_Z(s).$$
	
	By symmetry  $g_Z(-s) = g_Z(s)$ for every  $s \in [0,t],$ then
	$$\min_{-t \leq s \leq t} g_Z(s) = \min_{0 \leq s \leq t} g_Z(s).$$
	
	On the other hand, $[t-\lambda_{t}^{\beta},t] \subset [-t,t],$ since $-t<t-\lambda_{t}^{\beta},$ which implies that  $$  \min_{-t \leq s \leq t} g_Z(s) \leq \min_{t-\lambda_{t}^{\beta} \leq s \leq t} g_Z(s).$$
	
	Thus,
	
	\begin{align*}
	U(t) - V(t) & = \int_{t}^{t+\lambda_{t}^{\beta}} g_Z(s)ds - \int_{t-\lambda_{t}^{\beta}}^{t} g_Z(s) ds \leq
	\int_{t}^{t+\lambda_{t}^{\beta}} \max_{t \leq s \leq t + \lambda_{t}^{\beta}} g_Z(s) - \int_{t-\lambda_{t}^{\beta}}^{t} \min_{t-\lambda_{t}^{\beta} \leq s \leq t} g_Z(s) = \\
	& = \lambda_{t}^{\beta} \max_{t \leq s \leq t + \lambda_{t}^{\beta}} g_Z(s) - \lambda_{t}^{\beta} \min_{t-\lambda_{t}^{\beta} \leq s \leq t} g_Z(s) = \\
	& = \lambda_{t}^{\beta} \left (\max_{t \leq s \leq t + \lambda_{t}^{\beta}} g_Z(s) - \min_{t-\lambda_{t}^{\beta} \leq s \leq t} g_Z(s) \right) \leq 0
	\end{align*}
	
	ii) If, $t-\lambda_{t}^{\beta} < -t:$
	
	Observe that if $s \in [t,-t+\lambda_{t}^{\beta}],$ we have that $-s \in [t-\lambda_{t}^{\beta},-t]$ and since the density function is symmetric about $t=0,$ we know that $g_Z(s) = g_Z(-s).$ Hence,
	$$\int_{t}^{-t+\lambda_{t}^{\beta}} g_Z(s) ds = \int_{t-\lambda_{t}^{\beta}}^{-t} g_Z(s)ds.$$
	
	Since $g_Z$  decreasing, we obtain the following inequalities,
	$$ \max_{\lambda_{t}^{\beta} \leq s \leq t + \lambda_{t}^{\beta}} g_Z(s) \leq \max_{-t+\lambda_{t}^{\beta} \leq s \leq \lambda_{t}^{\beta}} g_Z(s) \leq \min_{0 \leq s \leq t} g_Z(s). $$
	
	Then,
	\begin{align*}
	U(t) - V(t) & = \int_{t}^{t+\lambda_{t}^{\beta}} g_Z(s)ds - \int_{t-\lambda_{t}^{\beta}}^{t} g_Z(s) ds = \\
	& =\int_{t}^{-t + \lambda_{t}^{\beta}} g_Z(s)ds + \int_{-t + \lambda_{t}^{\beta}}^{t+\lambda_{t}^{\beta}} g_Z(s)ds - \int_{t - \lambda_{t}^{\beta}}^{-t} g_Z(s)ds -
	\int_{-t}^{t} g_Z(s) ds = \\
	& = \int_{-t + \lambda_{t}^{\beta}}^{t+\lambda_{t}^{\beta}} g_Z(s)ds - \int_{-t}^{-t} g_Z(s)ds = \\
	& = \int_{-t + \lambda_{t}^{\beta}}^{\lambda_{t}^{\beta}} g_Z(s)ds + \int_{\lambda_{t}^{\beta}}^{t+\lambda_{t}^{\beta}} g_Z(s)ds - 2 \int_{0}^{t} g_Z(s) ds \leq \\
	& \leq t \max_{-t+\lambda_{t}^{\beta} \leq s \leq \lambda_{t}^{\beta}} g_Z(s)  + t \max_{\lambda_{t}^{\beta} \leq s \leq t + \lambda_{t}^{\beta}} g_Z(s) - 2t \min_{0 \leq s \leq t} g_Z(s) \leq \\
	& \leq 2t \left( \max_{-t+\lambda_{t}^{\beta} \leq s \leq \lambda_{t}^{\beta}} g_Z(s) - \min_{0 \leq s \leq t} g_Z(s) \right) \leq 0.
	\end{align*}
	
	Finally, since $U(t) + V(t) = \beta$ y $U(t) \leq V(t)$ $\Rightarrow$ $U(t) \leq \beta/2 \leq V(t).$
	
	b) Consider the random variable $-Z$ which is absolutely continuous, symmetric and unimodal about $t=0.$
	Denote $F_Z$ the cumulative distribution function of $Z$ and $F_{-Z}$ the cumulative distribution function of  $-Z.$
	In addition, observe that given  $t \in \mathbb{R}$
	$$F_{-Z}(t) = P(-Z \leq t) = P(Z \geq -t) = 1 - F_Z(-t).$$
	
	\begin{align*}
	U_{-Z}(t) & = F_{-Z}(t + \lambda_{t}^{\beta}) - F_{-Z}(t) = 1 - F_{Z}(-t - \lambda_{t}^{\beta}) - \left( 1 - F_{Z}(-t) \right) = \\
	& = F_{Z}(-t) - F_{Z}(-t - \lambda_{t}^{\beta}) = V_{Z}(-t).
	\end{align*}
	Analogously,
	\begin{align*}
	V_{-Z}(t) & = F_{-Z}(t) - F_{-Z}(t- \lambda_{t}^{\beta}) = 1 - F_{Z}(-t) - \left(1 - F_{Z}(-(t-\lambda_{t}^{\beta})) \right) = \\
	& = F_{Z}(-t + \lambda_{t}^{\beta}) - F_{Z}(-t) = U_{Z}(-t).
	\end{align*}
	
	Then, if $t<0$ we have that  $-t>0$ and since part (a) of the proof holds we have that,
	$$U_{-Z}(-t) \leq \beta/2 \leq V_{-Z}(-t) \ \Rightarrow V_{Z}(t) \leq \beta/2 \leq U_{Z}(t).$$
	
\end{proof}

\begin{lemma}[\textbf{P.3.}] \label{lema2Propiedad3}
	Let $Z$ be an absolutely continuous random variable with $C^1$ cumulative distribution function, $F,$ and density $g.$
	Let $\beta \in (0,1].$ Let $t_{0} \in \mathbb{R}$ such that the density function  $g$ satisfies
	$g(t_0 - \lambda_{t_0}^{\beta}) \in supp(g)$ or $g(t_0 + \lambda_{t_0}^{\beta}) \in supp(g).$ Then, there exists an interval  $I$ centred at $t_0$ and function
	$\lambda^{\beta}:I \rightarrow \mathbb{R}_{\geq 0}$ such that $\lambda$ is $C^1$ on $I,$ $\lambda^{\beta}(t_0) = \lambda_{t_0}^{\beta}.$
	Even more, for each $s \in I,$
	$$\frac{\partial \lambda^{\beta}}{ \partial t}(s)   =
	- \frac{g(t + \lambda^{\beta}(s)) - g(t - \lambda^{\beta}(s))}{g(t + \lambda^{\beta}(s)) + g(t - \lambda^{\beta}(s))}
	$$
\end{lemma}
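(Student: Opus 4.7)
The plan is to apply the standard implicit function theorem to the map $G:\mathbb{R}^{2}\to\mathbb{R}$ defined by
$$G(t,\lambda) \;=\; F_X(t+\lambda) - F_X(t-\lambda) - \beta.$$
Because $F_X$ is assumed to be $C^{1}$ with $F_X' = f_X$, $G$ is itself $C^{1}$, with partial derivatives
$$\frac{\partial G}{\partial t}(t,\lambda) = f_X(t+\lambda) - f_X(t-\lambda), \qquad \frac{\partial G}{\partial \lambda}(t,\lambda) = f_X(t+\lambda) + f_X(t-\lambda).$$
So the problem reduces to checking the hypotheses of the implicit function theorem at the point $(t_0,\lambda_{t_0}^{\beta})$ and then reading off the derivative formula.

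First I would verify $G(t_0,\lambda_{t_0}^{\beta})=0$. Since $X$ is absolutely continuous, $F_X$ has no atoms, and the Remark following Definition \ref{localityparamdef} shows that the infimum in \eqref{localityparam} is attained; consequently $F_X(t_0+\lambda_{t_0}^{\beta}) - F_X(t_0-\lambda_{t_0}^{\beta}) = \beta$, as required. Moreover, since $\beta>0$ and $\lambda\mapsto F_X(t_0+\lambda)-F_X(t_0-\lambda)$ is continuous and equals $0$ at $\lambda=0$, we must have $\lambda_{t_0}^{\beta}>0$.

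Next I would use the hypothesis on the support. Reading ``$f(t_0\pm\lambda_{t_0}^{\beta})\in\mathrm{Sop}(f_X)$'' in its natural strict form, namely that $f_X$ is strictly positive at one of the two endpoints $t_0\pm\lambda_{t_0}^{\beta}$, I obtain
$$\frac{\partial G}{\partial \lambda}(t_0,\lambda_{t_0}^{\beta}) \;=\; f_X(t_0+\lambda_{t_0}^{\beta})+f_X(t_0-\lambda_{t_0}^{\beta}) \;>\;0,$$
so the Jacobian condition of the implicit function theorem is satisfied.

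The implicit function theorem then yields an open interval $I$ centred at $t_0$ and a $C^{1}$ function $\lambda^{\beta}:I\to\mathbb{R}$ such that $\lambda^{\beta}(t_0)=\lambda_{t_0}^{\beta}$ and $G(s,\lambda^{\beta}(s))=0$ for every $s\in I$. By continuity of $\lambda^{\beta}$ and the strict positivity $\lambda^{\beta}(t_0)>0$, shrinking $I$ if necessary ensures $\lambda^{\beta}(s)>0$ throughout $I$, so $\lambda^{\beta}$ maps into $\mathbb{R}_{\geq 0}$. Differentiating the identity $G(s,\lambda^{\beta}(s))=0$ with respect to $s$ and solving for $(\lambda^{\beta})'(s)$ gives
$$\frac{\partial \lambda^{\beta}}{\partial t}(s) \;=\; -\,\frac{\partial G/\partial t}{\partial G/\partial \lambda}\bigg|_{(s,\lambda^{\beta}(s))} \;=\; -\,\frac{f_X(s+\lambda^{\beta}(s)) - f_X(s-\lambda^{\beta}(s))}{f_X(s+\lambda^{\beta}(s)) + f_X(s-\lambda^{\beta}(s))},$$
which is exactly the claimed formula. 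The only mildly delicate point is the support interpretation: since $f_X$ is merely continuous, a boundary point of $\mathrm{supp}(f_X)$ can have $f_X=0$, so the denominator could vanish under the weaker literal reading of the hypothesis; the strict reading above (consistent with how the inequality \eqref{desigualdadLema3Propiedad3} in \textbf{P. 3} is used) is what makes the argument go through without further work.
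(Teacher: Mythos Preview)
Your proof is correct and follows exactly the same route as the paper: apply the implicit function theorem to $G(t,\lambda)=F_X(t+\lambda)-F_X(t-\lambda)-\beta$, compute the two partials, and read off the derivative formula. The paper's own argument states only this much; your version simply fills in the verification that $G(t_0,\lambda_{t_0}^{\beta})=0$ and that the support hypothesis forces $\partial G/\partial\lambda\neq 0$ at the base point, together with the observation about shrinking $I$ to keep $\lambda^{\beta}>0$.
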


\begin{proof}
	The proof follows straight forward applying the implicit function theorem to the function
	$W: \mathbb{R} \times \mathbb{R}_{\geq 0} \rightarrow \mathbb{R},$

	$$W(t,\lambda) = F(t+\lambda) - F(t-\lambda) - \beta.$$

	Then we have,
	$$\frac{\partial W}{\partial t}(t,\lambda) = \frac{\partial}{\partial t} \left(F(t + \lambda) - F(t - \lambda) - \beta \right) = g(t + \lambda) - g(t-\lambda).$$
	$$\frac{\partial W}{\partial \lambda}(t,\lambda) = \frac{\partial}{\partial \lambda} \left(F(t + \lambda) - F(t - \lambda) - \beta \right) = g(t + \lambda) + g(t-\lambda).$$
	
\end{proof}

\begin{lemma}[\textbf{P.3.}] \label{lema3Propiedad3}
	Let $Z$ be an absolutely continuous, symmetric and unimodal about $t=0$ random variable, such that the cumulative distribution function $F$ is $C^1$.
	Let $\beta \in (0,1],$ and $g$ the density function such that $g(t+\lambda_{t}^{\beta})g(t-\lambda_{t}^{\beta})>0,$ which in addition satisfies that
	\begin{equation*} \label{desigualdadLema3Propiedad3}
	g(t) \geq 2 \frac{g(t+\lambda_{t}^{\beta})g(t-\lambda_{t}^{\beta})}{g(t+\lambda_{t}^{\beta})+g(t-\lambda_{t}^{\beta})} \ \forall t \in \mathbb{R}.
	\end{equation*}
	Then,
	\begin{enumerate}[a)]
		\item $LD_{S}^{\beta}(t,F)$ is non increasing if $t>0.$
		\item $LD_{S}^{\beta}(t,F)$ is non decreasing if $t<0.$
	\end{enumerate}
	
\end{lemma}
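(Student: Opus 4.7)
The plan is to compute the derivative of $LD_{S}^{\beta}(t,F_X)$ in $t$ and read off its sign using Lemmas 1 and 2 together with hypothesis (\ref{desigualdadLema3Propiedad3}). Adopt the notation of Lemma 1, so that
$$LD_{S}^{\beta}(t,F_X) = \frac{2}{\beta^{2}} U(t) V(t), \qquad U(t)+V(t)=\beta.$$
The identity $U+V=\beta$ is the engine of the proof: differentiating it gives $V'(t) = -U'(t)$, so the product rule collapses to
$$\frac{d}{dt}\bigl(U(t)V(t)\bigr) = U'(t)\,\bigl(V(t)-U(t)\bigr).$$
Thus the sign of the derivative factors as a product of two pieces that can be controlled independently, one by the hypothesis on the density and the other by Lemma 1.

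Next I would invoke Lemma 2, which, under the positivity assumption $f_X(t\pm\lambda_t^{\beta})>0$, gives the $C^{1}$-dependence of $\lambda=\lambda_t^{\beta}$ on $t$ and the explicit derivative
$$\lambda'(t) = \frac{f_X(t-\lambda)-f_X(t+\lambda)}{f_X(t+\lambda)+f_X(t-\lambda)}.$$
Plugging this into $U'(t)=f_X(t+\lambda)\bigl(1+\lambda'(t)\bigr)-f_X(t)$ and simplifying the fraction yields
$$U'(t) \;=\; \frac{2\,f_X(t+\lambda)\,f_X(t-\lambda)}{f_X(t+\lambda)+f_X(t-\lambda)} \;-\; f_X(t),$$
which is exactly the quantity that hypothesis (\ref{desigualdadLema3Propiedad3}) forces to be non-positive. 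Hence $U'(t)\leq 0$ throughout.

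Finally, Lemma 1 supplies the sign of the remaining factor $V(t)-U(t)$. For $t>0$ part (a) gives $V(t)\geq \beta/2\geq U(t)$, so $V(t)-U(t)\geq 0$ and the product $U'(t)\bigl(V(t)-U(t)\bigr)$ is non-positive, proving (a). For $t<0$ part (b) reverses the inequality to $V(t)-U(t)\leq 0$, so the product is non-negative, proving (b). Multiplying by the positive constant $2/\beta^{2}$ preserves signs and delivers the two claimed monotonicities.

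The main obstacle is essentially bookkeeping: correctly chain-ruling through the implicit dependence $\lambda=\lambda_t^{\beta}$ and algebraically recognizing the harmonic-mean expression $2ab/(a+b)$ that appears in $U'(t)$, so that the hypothesis (\ref{desigualdadLema3Propiedad3}) can be applied verbatim. Once that simplification is carried out, nothing else is required beyond the two auxiliary lemmas; the conceptual point is that the constraint $U+V=\beta$ reduces what looks like a delicate two-term sum to a clean product, each factor of which is handled by exactly one of the available tools.
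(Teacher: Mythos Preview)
Your proof is correct and follows essentially the same route as the paper: write $LD_S^{\beta}=\tfrac{2}{\beta^2}UV$ with $U+V=\beta$, differentiate to get a product of $U'(t)$ and $V(t)-U(t)=\beta-2U(t)$, use Lemma~2 to compute $U'(t)$ and the hypothesis (\ref{desigualdadLema3Propiedad3}) to make it non-positive, and use Lemma~1 to fix the sign of the second factor according to whether $t>0$ or $t<0$. The only cosmetic difference is that the paper writes $\beta-2U(t)$ where you write $V(t)-U(t)$, which is the same quantity.
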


\begin{proof}
	Following Lemma \ref{lema2Propiedad3} and for the sake of simplicity denote $\lambda_{t}^{\beta} = \lambda^{\beta}(t).$
	
	It is clear  that,
	$$LD_{S}^{\beta}(t,F) = \frac{2}{\beta^2} \left[ F(t + \lambda(t)) - F(t) \right] \left[ \beta - (F(t + \lambda(t)) - F(t)) \right] =
	\frac{2}{\beta^2} U(t)(\beta - U(t)).$$
	
	The derivative of  $LD_{S}^{\beta}$ with respect to  $t$ is:
	$$\frac{\partial LD_{S}^{\beta}}{\partial t}(t,F) = \frac{2}{\beta^2} \left[ \beta \frac{\partial U}{\partial t}(t) - 2 U(t) \frac{\partial U}{\partial t}(t) \right] =
	\frac{2}{\beta^2} \frac{\partial U}{\partial t}(t) \left[ \beta - 2U(t) \right].$$
	
	By Lemma \ref{lema2Propiedad3} and considering the derivative of $U(t)$ respect to $t,$ we have that:
	\begin{align*}
	\frac{\partial U}{\partial t}(t) & = \frac{\partial}{\partial t} \left( F(t + \lambda(t)) - F(t) \right) =
	g(t + \lambda(t))\left(1 + \frac{\partial \lambda}{\partial t}(t)\right) - g(t) = \\
	& = g(t + \lambda(t)) +  g(t + \lambda(t)) \frac{\partial \lambda}{\partial t}(t) - g(t) = \\
	& = g(t + \lambda(t)) - g(t) - g(t + \lambda(t)) \frac{g(t + \lambda(t)) - g(t - \lambda(t))}{g(t + \lambda(t)) + g(t - \lambda(t))} = \\
	& = 2 \frac{g(t + \lambda(t)) g(t - \lambda(t))}{g(t + \lambda(t)) + g(t - \lambda(t))} - g(t) \leq 0.
	\end{align*}
	
	From Lemma \ref{lema1Propiedad3} we have that:
	\begin{enumerate}[a)]
		\item If $t<0,$ $\displaystyle U(t) > \frac{\beta}{2} \Rightarrow \frac{\partial LD_{S}^{\beta}}{\partial t}(t,F) = \frac{2}{\beta^2} \frac{\partial U}{\partial t}(t) \left[ \beta - 2U(t) \right] \geq 0.$
		\item if $t>0,$ $\displaystyle U(t) < \frac{\beta}{2} \Rightarrow \frac{\partial LD_{S}^{\beta}}{\partial t}(t,F) = \frac{2}{\beta^2} \frac{\partial U}{\partial t}(t) \left[ \beta - 2U(t) \right] \leq 0.$
	\end{enumerate}
	
\end{proof}

\begin{lemma}[\textbf{P.3.}] \label{lema4Propiedad3}
	Let $Z$ be a random variable and $\mu \in \mathbb{R}.$ Let $\beta \in (0,1]$ and $Y = Z - \mu.$ Denote $F_{Z}$ and $F_{Y}$ to the corresponding cumulative distribution functions, then, \\
	$LD_{S}^{\beta}(t,F_Z) = LD_{S}^{\beta}(t-\mu,F_Y).$
\end{lemma}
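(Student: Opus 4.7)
The plan is to exploit the translation relationship between $F_X$ and $F_Y$ and show that both the neighborhood width and the probability mass appearing in the definition of $LD_S^\beta$ are preserved under the shift.

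First I would observe that for every $s \in \mathbb{R}$, $F_Y(s) = P(X - \mu \leq s) = F_X(s + \mu)$. I would then apply this identity at the three points $(t - \mu) + \lambda$, $(t - \mu)$ and $(t-\mu) - \lambda$, to obtain
$$F_Y((t-\mu) + \lambda) - F_Y((t-\mu) - \lambda) = F_X(t + \lambda) - F_X(t - \lambda),$$
for every $\lambda > 0$. Plugging this into Definition~\ref{localityparamdef} applied to $F_Y$ at the point $t-\mu$, the set of admissible $\lambda$'s in the infimum defining $\lambda^\beta_{t-\mu}$ relative to $F_Y$ coincides with the set defining $\lambda^\beta_{t}$ relative to $F_X$. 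Taking infima gives the key equality $\lambda^\beta_{t-\mu}(F_Y) = \lambda^\beta_t(F_X)$.

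With this in hand, I would substitute directly into the definition of the simplicial local depth in Equation~(\ref{profsimplocalunidim}): for $F_Y$ at $t-\mu$ we get the product
$$\tfrac{2}{\beta^2}\bigl[F_Y((t-\mu)+\lambda^\beta_{t-\mu}) - F_Y(t-\mu)\bigr]\bigl[F_Y(t-\mu) - F_Y((t-\mu)-\lambda^\beta_{t-\mu})\bigr],$$
and each factor rewrites, via $F_Y(s) = F_X(s+\mu)$ and the identification of the neighborhood widths, as the corresponding factor of $LD_S^\beta(t, F_X)$. Concretely, $F_Y((t-\mu) \pm \lambda^\beta_{t-\mu}) = F_X(t \pm \lambda^\beta_t)$ and $F_Y(t-\mu) = F_X(t)$, so the two expressions are identical.

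There is no real obstacle here: the whole argument is just a change of variables showing that the univariate simplicial local depth inherits translation invariance from the cumulative distribution function. The only thing to be careful about is verifying that the \emph{infimum} defining the neighborhood width behaves correctly under the shift; since the correspondence $\lambda \mapsto \lambda$ is a bijection between the two admissible sets, this is immediate.
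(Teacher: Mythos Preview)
Your proposal is correct and follows essentially the same route as the paper: both use the identity $F_Y(s)=F_X(s+\mu)$ and carry the shift through the definition of $LD_S^\beta$. If anything, you are more careful than the paper, which does not explicitly justify that the neighborhood widths $\lambda^\beta_{t-\mu}$ (with respect to $F_Y$) and $\lambda^\beta_t$ (with respect to $F_X$) coincide; your argument via the equality of the admissible sets in Definition~\ref{localityparamdef} fills that small gap.
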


\begin{proof}
	Let $t \in \mathbb{R},$ we have that $F_Z(t) = F_Y(t - \mu).$
	Then,
	$$U_Z(t) = F_Z(t + \lambda_{t}^{\beta}) - F_Z(t) = F_Y(t-\mu - \lambda_{t}^{\beta}) - F_Y(t- \mu) = U_{Y}(t - \mu).$$
	entails the desired equality.
\end{proof}

Finally we prove \textbf{P.3.}

\begin{proof}[Proof: \textbf{P.3.} monotonicity relative to the deepest point]
	Let $t \in \mathbb{R}$ and $Y = Z - \theta.$
	Suppose that $t>\theta$ then $t-\theta>0.$
	On the other hand, $(1-s)\theta + st = \theta + s(t-\theta)$ and $s(t-\theta) < t-\theta.$
	Then, Lemmas \ref{lema3Propiedad3} and \ref{lema4Propiedad3} entail that,
	
	\begin{eqnarray*}
		LD_S^{\beta}\left( t,F_Z)\right)  &=& LD_S^{\beta}\left( t-\theta,F_Y\right)  \leq LD_S^{\beta} \left( s(t-\theta),F_Y \right) = \\
		&=& LD_S^{\beta} \left( s(t-\theta) + \theta,F_Z \right) = LD_S^{\beta}\left( (1-s) \theta + st,F_Z)\right) .
	\end{eqnarray*}

\end{proof}

%
%

Before proving \textbf{P.5.} the following result must be stated.
\begin{lemma} \label{contenx}
	Let $Z$ be an absolutely continuous random variable with cumulative distribution function  $F,$ then $F^{-1}(s)= \inf \{ t \in \mathbb{R} : F(t) \geq s \}.$ Let $ (z_n)_{ n \geq 1 }$ be a real sequence such that
	$z_n \xrightarrow[n \to +\infty]{} z$ and $\beta \in (0,1].$ Then, $$LD_{S}^{\beta}(z_n,F) \xrightarrow[n \to +\infty]{} LD_{S}^{\beta}(z,F).$$
\end{lemma}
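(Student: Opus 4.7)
The plan is to exploit the fact that the whole integrand of $LD_S^\beta$ depends on $z$ only through the three quantities $F(z)$, $F(z+\lambda_z^\beta)$ and $F(z-\lambda_z^\beta)$. Since $F$ is continuous, the first one is immediate, so the real work is controlling the behavior of $\lambda_{z_n}^\beta$ as $z_n\to z$. I would therefore organize the argument around showing that along any subsequence, $\lambda_{z_n}^\beta$ accumulates at a limit $L$ for which $F(z+L)=F(z+\lambda_z^\beta)$ and $F(z-L)=F(z-\lambda_z^\beta)$, so the continuity of $F$ forces convergence of the depth.

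First I would show that $\{\lambda_{z_n}^\beta\}$ is bounded. Since $z_n\to z$, there is $M$ with $|z_n|\le M$, and if one picks $\lambda^\star$ so that $F(\lambda^\star)-F(-\lambda^\star)\ge\beta$, then for $|z_n|\le M$ and $\lambda\ge\lambda^\star+M$ one has $[z_n-\lambda,z_n+\lambda]\supseteq[-\lambda^\star,\lambda^\star]$ and hence $F(z_n+\lambda)-F(z_n-\lambda)\ge\beta$, giving $\lambda_{z_n}^\beta\le\lambda^\star+M$. By Bolzano--Weierstrass, any subsequence admits a further subsequence, still denoted by $\lambda_{z_{n_k}}^\beta$, converging to some $L\ge 0$.

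Next, because $F$ is continuous and the infimum in the definition of $\lambda_{z_{n_k}}^\beta$ is attained (see the remark after Definition~\ref{localityparamdef}), we have $F(z_{n_k}+\lambda_{z_{n_k}}^\beta)-F(z_{n_k}-\lambda_{z_{n_k}}^\beta)=\beta$; passing to the limit yields $F(z+L)-F(z-L)=\beta$, so $L\ge\lambda_z^\beta$. To handle the possibility $L>\lambda_z^\beta$, I would use that for any $\lambda<L$ one has $\lambda<\lambda_{z_{n_k}}^\beta$ eventually, whence $F(z_{n_k}+\lambda)-F(z_{n_k}-\lambda)<\beta$ and in the limit $F(z+\lambda)-F(z-\lambda)\le\beta$. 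Combined with $F(z+\lambda_z^\beta)-F(z-\lambda_z^\beta)=\beta$, this forces $G_z(\lambda):=F(z+\lambda)-F(z-\lambda)$ to be constantly equal to $\beta$ on $[\lambda_z^\beta,L]$. Since $F$ is absolutely continuous with density $f$, differentiating $G_z$ gives $f(z+\lambda)+f(z-\lambda)=0$ a.e.\ on $[\lambda_z^\beta,L]$, so $f$ vanishes a.e.\ on $[z-L,z-\lambda_z^\beta]\cup[z+\lambda_z^\beta,z+L]$ and $F$ is constant on both intervals. In either case we conclude $F(z+L)=F(z+\lambda_z^\beta)$ and $F(z-L)=F(z-\lambda_z^\beta)$.

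Combining this with the continuity of $F$ at $z\pm L$ and at $z$, I would then conclude
\begin{align*}
F(z_{n_k}+\lambda_{z_{n_k}}^\beta) &\longrightarrow F(z+L)=F(z+\lambda_z^\beta),\\
F(z_{n_k}-\lambda_{z_{n_k}}^\beta) &\longrightarrow F(z-L)=F(z-\lambda_z^\beta),\\
F(z_{n_k}) &\longrightarrow F(z),
\end{align*}
so $LD_S^\beta(z_{n_k},F)\to LD_S^\beta(z,F)$. Since every subsequence of $\{\lambda_{z_n}^\beta\}$ has such a convergent sub-subsequence and the corresponding depths all tend to the same limit $LD_S^\beta(z,F)$, the whole sequence $LD_S^\beta(z_n,F)$ converges to $LD_S^\beta(z,F)$. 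The main obstacle is precisely the plateau case $L>\lambda_z^\beta$, which the absolute continuity of $F$ is exactly the right hypothesis to resolve.
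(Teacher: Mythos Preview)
Your proof is correct and takes a genuinely different route from the paper. The paper introduces the symmetrized distribution $F_z(t)=\tfrac12 F(t)+\tfrac12\bigl(1-F(2z-t)\bigr)$, observes that $z+\lambda_z^\beta$ is the $\tfrac{1+\beta}{2}$-quantile of $F_z$, and then deduces $\lambda_{z_n}^\beta\to\lambda_z^\beta$ from the pointwise convergence $F_{z_n}\to F_z$ and the corresponding convergence of quantiles. Your argument is more elementary in that it avoids any auxiliary distribution: you use compactness to pass to a subsequential limit $L$ of $\lambda_{z_n}^\beta$ and then analyze directly the possibility $L>\lambda_z^\beta$ via the plateau of $G_z(\lambda)=F(z+\lambda)-F(z-\lambda)$. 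In fact your version is slightly more careful: you never claim $\lambda_{z_n}^\beta\to\lambda_z^\beta$ (which can fail if $F$ is flat on one side of $z\pm\lambda_z^\beta$), but only that $F(z_{n_k}\pm\lambda_{z_{n_k}}^\beta)\to F(z\pm\lambda_z^\beta)$ along every sub-subsequence, which is exactly what $LD_S^\beta$ needs. The paper's symmetrization trick is shorter and conceptually neat, but tacitly relies on uniqueness of the $\tfrac{1+\beta}{2}$-quantile of $F_z$, precisely the issue your plateau analysis resolves using absolute continuity.
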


\begin{proof}
	
	Since  $F$ is continuous  is enough to show that  $ \lambda_{z_n}^{\beta} \xrightarrow[n \to +\infty]{} \lambda_{z}^{\beta}.$
	
	Let $t \in \mathbb{R},$ denote $F_z(t) = \frac{1}{2}F(t) + \frac{1}{2} \left(1 - F(2z-t) \right)$ to the symmetrize version of $F$ about $z.$ Recall that,
	\begin{align*}
	\lambda_{z}^{\beta}  & :  F(z + \lambda_{z}^{\beta}) - F(z - \lambda_{z}^{\beta}) = \beta \\
	\lambda_{z_n}^{\beta}  & :  F(z_n + \lambda_{z_n}^{\beta}) - F(z_n - \lambda_{z_n}^{\beta}) = \beta \ \mbox{ for each} \ n \in \mathbb{N}
	\end{align*}
	
	Then,
	\begin{align*}
	F_{z}(z + \lambda_{z}^{\beta}) & = \frac{1}{2}F(z + \lambda_{z}^{\beta}) + \frac{1}{2} \left(1 - F(2z - (z + \lambda_{z}^{\beta})) \right) =
	\frac{1}{2}F(z + \lambda_{z}^{\beta}) + \frac{1}{2} \left(1 - F(z - \lambda_{z}^{\beta})) \right) \\
	& = \frac{1}{2} \left[ F(z + \lambda_{z}^{\beta}) - F(z - \lambda_{z}^{\beta}) \right] + \frac{1}{2} = \frac{\beta}{2} + \frac{1}{2}.
	\end{align*}
	
	Meaning that  $z + \lambda_{z}^{\beta}  = F_{z}^{-1}(\frac{\beta}{2} + \frac{1}{2}),$ analogously for
	$z_n + \lambda_{z_n}^{\beta}.$
	Given that $F_{z_n}(t) \xrightarrow[n \to +\infty]{} F_z(t)$ is clear that
	\begin{equation*}
	z_n + \lambda_{z_n}^{\beta} = F_{z_n}^{-1} \left( \frac{\beta}{2} + \frac{1}{2} \right) \xrightarrow[n \to +\infty]{} F_{z}^{-1} \left( \frac{\beta}{2} + \frac{1}{2} \right) = z + \lambda_{z}^{\beta}.
	\end{equation*}
\end{proof}

\begin{proof}[Proof: \textbf{P.5.} continuous as a function of $x$]

	It goes straight forward from Lemma \ref{contenx}  and the Dominated Convergence Theorem.
	%
\end{proof}

\begin{proof}[Proof: \textbf{P.6.} continuous as a functional of $P$]

	It goes straight forward from  Theorem 2.1, Billingsley \cite{B68} and the fact that $f:\mathbb{E} \rightarrow \mathbb{R}$ in uniformly continuous. Since  $F_{n,f}$ is the cumulative distribution function of  $f(X_n)$ which converges pointwise to  $F_f,$ then is clear that
	$$LD_{S}^{\beta}(f(x),P_{n,f}) \xrightarrow[n \to +\infty]{} LD_{S}^{\beta}(f(x),P).$$ Then it is consequence of the Dominated Convergence Theorem.

	%
	%
	%
\end{proof}

\subsection{ Uniform Strong Consistency of the IDLD.}
\setcounter{equation}{0}

In order to establish the uniform strong convergence of the one dimensional simplicial local depth. The following Lemma must be proved in advanced.

First of all, it is important to note the following facts. Assuming that the conditions stated in Remark 4 hold. For the sake of simplicity denote,  $\lambda=\lambda_{z}^{\beta},$
$p_{+} = F(z+ \lambda),$ $p_{-} = F(z- \lambda)$ and $p = F(z).$ Let $p \in (0,1),$ then,

\begin{itemize}
	
	\item[(i)] $Q_{p,n} = Z_{([np]+1)},$ $Q_{p_{+},n} = Z_{([np_{+}] + 1)}$ and $ Q_{p_{-},n} = Z_{([np_{-}] + 1)}.$
	
	\item[(ii)] $ \displaystyle F_n(Q_{p_{+},n}) - F_n(Q_{p_{-},n}) = \frac{[n p_{+}] + 1}{n} - \frac{[n p_{-}] + 1}{n} = \frac{ [n p_{+}] - [n p_{-}]}{n}.$
	
	Moreover, $ \displaystyle \frac{ [n p_{+} - n p_{-}]}{n} \leq \frac{ [n p_{+}] - [n p_{-}]}{n} \leq \frac{ [n p_{+}] - [n p_{-}] + 1}{n}.$
	
	$[n p_{+} - n p_{-}] = [n (p_{+} - p_{-})] = [n \beta].$ Then,
	
	$ \displaystyle \frac{[n \beta]}{n} \leq F_n(Q_{p_{+},n}) - F_n(Q_{p_{-},n}) \leq  \frac{[n \beta] + 1}{n}.$
	
	\item[(iii)]  $Z_{( [np_{-} ] + 1)} \leq z \leq Z_{([np_{+}]+1)}.$
	
	\item[(iv)]  $ [z - d^{(k)}(z),z + d^{(k)}(z)] \subset  [ Z_{ ([np_{-}] + 1)} , Z_{ ([np_{+}]+1)} ].$
	
	\item[(v)] $d_{(k)}(z) = \min \{z - Q_{p_{-},n}, z + Q_{p_{+},n} \}.$
	
	\item[(vi)] $ \displaystyle F_n(Q_{p_{+},n}) - F_n(z + d_{(k)}(z)) \leq \frac{1}{n}$ and $ \displaystyle F_n(Q_{p_{-},n}) - F_n(z - d_{(k)}(z)) \leq \frac{1}{n}.$
	
	\item[(vii)] $ \beta(k) \leq \beta \leq \beta(k) + 1.$
	
\end{itemize}

\begin{lemma} Let $Z$ be an absolutely continuous random variable with distribution $P^1.$ Suppose given  $Z_1, \dots, Z_n$  iid random variables, also with distribution $P^1$.
	Let $z_p = F^{-1}(p)$ be the quantile $p \in (0,1)$ from $F$ and $Q_{p,n}$ the quantile $p$ from $F_n, $ which is the empirical cumulative distribution function of  $Z_1, \dots, Z_n.$  Then,
	
	\begin{itemize}
		\item[(i)] $Q_{p,n} = Z_{ \left([np] +1 \right) }.$
		\item[(ii)] $| F_n(Q_{p,n}) - F(z_p) | \leq \frac{1}{n} \ \forall \ p \in (0,1). $
		\item[(iii)] $ | F(Q_{p,n}) - F(z_p) | \leq ||F_n - F ||_{\infty} + \frac{1}{n}.$
	\end{itemize}
	
\end{lemma}

\begin{proof}[Proof: Lemma 1]
	
	\begin{enumerate}[(i)]
		\item It follows straight forward by definition.
		
		\item Let $p \in (0,1),$
		\begin{equation*}
		| F_n(Q_{p,n}) - F(z_p) | = | F_n(Q_{p,n}) - p | = \frac{[np] + 1}{n} - p = \frac{[np] - np + 1}{n} \leq \frac{1}{n}.
		\end{equation*}
		
		\item Let $p \in (0,1),$
		\begin{align}
		| F(Q_{p,n}) - F(z_p) | & \leq | F(Q_{p,n}) - F_n(Q_{p,n}) | + |F_n(Q_{p,n}) - F(z_p) | \nonumber \\
		& \leq \sup_{t \in \mathbb{R}} |F_n(t) - F(t) | + \frac{1}{n} = ||F_n - F ||_{\infty} + \frac{1}{n}. \nonumber
		\end{align}
		
	\end{enumerate}
\end{proof}

\begin{lemma} \label{desigualdadLDS}
	Let $Z_1, \dots, Z_n$ be a real random sample with cumulative distribution function $F.$ Let $\beta \in (0,1]$ and $z \in \mathbb{R}.$ Then,
	\begin{equation}
	\left| ELD_{S}^{\beta(k)}(z,F_n) - LD_{S}^{\beta}(z,F) \right| \leq \frac{1}{2} \left( 1 - \left( \frac{\beta(k)}{\beta} \right)^2 \right) + \frac{2}{\beta^2} \left(\frac{8}{n} + 4 ||F_n - F||_{\infty} \right)
	\label{desguniv}
	\end{equation}
\end{lemma}

\begin{proof}[Proof: Lemma 2]
	
	For the sake of simplicity denote $\lambda=\lambda_{z}^{\beta}$ and
	$d^k=d^{(k)}(z).$
	\par
	\begin{align}
	&  \Big| \left( F(z + \lambda) - F(z) \right) \left( F(z) - F(z - \lambda) \right) - \left( F_n(z + d^k) - F_n(z) \right) \left( F_n(z) - F_n(z - d^k) \right) \Big| = \nonumber \\
	&  =  \Big|  \left[ F(z + \lambda)F(z) - F(z + \lambda)F(z - \lambda) - F(z)^2 + F(z)F(z - \lambda) \right] - \nonumber \\
	&   -  \left[ F_n(z + d^k)F_n(z) - F_n(z + d^k)F_n(z - d^k) - F_n(z)^2 + F_n(z)F_n(z - d^k) \right] \Big|  = \nonumber \\
	&   = \Big| F(z + \lambda)F(z) - F(z + \lambda)F(z - \lambda) - F(z)^2 + F(z)F(z - \lambda) - \nonumber \\
	&   -  F_n(z + d^k)F_n(z) + F_n(z + d^k)F_n(z - d^k) + F_n(z)^2 - F_n(z)F_n(z - d^k) \Big| = \nonumber \\
	&  \leq  \Big| F(z+\lambda)F(z) - F_n(z+d^k)F_n(z) \Big| + \Big| F_n(z+d^k)F_n(z-d^k) - F(z+\lambda)F(z-\lambda)  \Big| + \label{4term} \\
	&  +  \Big| F_n(z)^2 - F(z)^2  \Big| + \Big| F(z-\lambda)F(z) - F_n(z-d^k)F_n(z)  \Big| \nonumber
	\end{align}
	
	We analyze each term of Equation (\ref{4term}),

	\begin{enumerate}[(a)]
		
		\item
		\begin{align*}
		\displaystyle
		& \Big| F(z+\lambda)F(z) - F_n(z+d^k)F_n(z) \Big| = \\
		& = \Big| F(z+\lambda)F(z) - F(z)F_n(Q_{p_{+},n}) + F(z)F_n(Q_{p_{+},n}) - F_n(z+d^k)F_n(z) \Big| \leq \\
		& \leq F(z) \Big| F(z+\lambda) - F_n(Q_{p_{+},n}) \Big| + \\
		& + \Big| F(z)F_n(Q_{p_{+},n}) - F(z)F_n(z+d^k) + F(z)F_n(z+d^k)-F_n(z+d^k)F_n(z) \Big| \leq \\
		& \leq \frac{1}{n} + F(z) \Big| F_n(Q_{p_{+},n}) - F_n(z+d^k)\Big| + \Big| F(z) - F_n(z) \Big| F_n(z+d^k) \leq \\
		& \leq \frac{1}{n} + \frac{1}{n} + \|F - F_n\|_{\infty} = \frac{2}{n} + \|F - F_n\|_{\infty}.
		\end{align*}

		\item
		\begin{align*}
		\displaystyle
		& \Big| F_n(z+d^k)F_n(z-d^k) - F(z+\lambda)F(z-\lambda)\Big| = \\
		& = \Big| F_n(z+d^k)F_n(z-d^k) - F(z+\lambda)F_n(z-d^k) + \\
		& + F(z+\lambda)F_n(z-d^k) - F(z+\lambda)F(z-\lambda)\Big| \leq \\
		& \leq F_n(z-d^k) \Big| F_n(z+d^k) - F(z+\lambda) \Big| + F(z+\lambda) \Big| F_n(z-d^k) - F(z-\lambda) \Big| \leq \\
		& \leq \Big| F_n(z+d^k) - F(z+\lambda) \Big| + \Big| F_n(z-d^k) - F(z-\lambda) \Big| \leq \\
		& \leq \Big| F_n(z+d^k) - F_n(Q_{p_{+},n}) \Big| + \Big| F_n(Q_{p_{+},n}) - F(z+\lambda) \Big| + \\
		& + \Big| F_n(Q_{p_{-},n}) - F_n(z-d^k) \Big| + \Big| F_n(Q_{p_{-},n}) - F(z-\lambda)\Big| \leq
		\frac{1}{n} + \frac{1}{n} + \frac{1}{n} + \frac{1}{n} = \frac{4}{n}.
		\end{align*}
		
		\item
		\begin{align*}
		& \Big| F_n(z)^2 - F(z)^2 \Big| = \Big| F_n(z) - F(z)\Big| \mbox{   } \Big| F_n(z) + F(z)\Big| \leq \\
		& \leq 2 \Big|F_n(z) - F(z) \Big| \leq 2 \|F_n - F\|_{\infty}.
		\end{align*}
		
		\item Analogue to item (a).
	\end{enumerate}
	Finally, denote
	$$ H = \left( F(z + \lambda)F(z) \right) \left( F(z)F(z-\lambda) \right)$$
	
	and
	
	$$G = \left( F_n(z+d^k) - F_n(z) \right) \left( F_n(z) - F_n(z-\lambda) \right).$$
	
	Then,
	\begin{align*}
	& \Big| ELD_{S}^{\beta}(z,F_n) - LD_{S}^{\beta}(z,F) \Big| = \left| \frac{2}{\beta(k)^2}G - \frac{2}{\beta^2}H \right| \leq
	\left| \frac{2}{\beta(k)^2}G - \frac{2}{\beta^2}G \right| + \left| \frac{2}{\beta^2}G - \frac{2}{\beta^2}H \right| \leq \\
	& \leq \left( \frac{2}{\beta(k)^2} - \frac{2}{\beta^2} \right) | G |  + \frac{2}{\beta^2} \Big| G-H \Big|.
	\end{align*}
	
	On one hand, since Proposition 2 holds its clear that each term of  $G$ is smaller than or equal to  $\frac{\beta(k)^2}{2}$. Hence,
	
	\begin{equation}
	\left( \frac{2}{\beta(k)^2} - \frac{2}{\beta^2} \right) |G| \leq \left( \frac{2}{\beta(k)^2} - \frac{2}{\beta^2} \right) \frac{\beta(k)^2}{4} =
	\frac{1}{2} \left( 1 - \left( \frac{\beta(k)}{\beta} \right)^2 \right).
	\label{desg1}
	\end{equation}
	
	On the other hand, we already know that,
	\begin{equation}
	\Big|G-H\Big| \leq \frac{8}{n} + 4\|F - F_n\|_{\infty}.
	\label{desg2}
	\end{equation}
	
	From Inequalities (\ref{desg1}) and (\ref{desg2}) we prove the inequality stated in the statement.
\end{proof}

\begin{proof}[Proof:Theorem 1]
	
	\textit{(a)} Let $f \in \mathbb{E}'$ and $x \in \mathbb{E}.$ Denote $P_{f}$ to the probability measure associated to  $f(X)$ where $X$ is a random element on  $\mathbb{E}$ with probability measure $P.$ Analogously, denote $P_{n,f}$ to the empirical probability measure of  $P_f$ based on  $f(X_1), \dots, f(X_n) $ and $F_{f,n}$ to the empirical cumulative distribution function.
	
	By Proposition 2 we have,
	\begin{equation*}
	\Big| ELD_{S}^{\beta(k)}(f(x),P_{f,n}) - LD_{S}^{\beta}(f(x),P_f)\Big| \leq \frac{1}{2} \left( 1 - \left( \frac{\beta(k)}{\beta} \right)^2 \right) + \frac{2}{\beta^2} \left(\frac{8}{n} + 4 ||F_{n,f} - F_{f}||_{\infty} \right).
	\end{equation*}
	
	Observe that,
	\begin{align}
	\frac{1}{2} \left( 1 - \left( \frac{\beta(k)}{\beta} \right)^2 \right) = \frac{1}{2} \frac{ \beta^2 - \beta(k)^2 }{\beta^2} =
	\frac{1}{2} (\beta - \beta(k) ) \frac{  (\beta + \beta(k) ) }{\beta^2} \leq  \frac{2}{2n\beta^2} = \frac{1}{n \beta^2}.
	\nonumber
	\end{align}
	
	Thus,
	\begin{equation}
	\Big|ELD_{S}^{\beta(k)}(f(x),P_{f,n}) - LD_{S}^{\beta}(f(x),P_f)\Big| \leq \frac{1}{\beta^2} \left(\frac{17}{n} + 8 ||F_{n,f} - F_{f}||_{\infty} \right).
	\label{xfijo}
	\end{equation}
	Since it does not depend on $x$ the inequality hold for the supreme of the left hand side of Inequality (\ref{xfijo}).
	
	\begin{equation} \label{supremonodependedex}
	\sup_{x \in \mathbb{E}} \Big| ELD_{S}^{\beta(k)}(f(x),P_{f,n}) - LD_{S}^{\beta}(f(x),P_f) \Big| \leq \frac{1}{\beta^2} \left(\frac{17}{n} + 8 ||F_{n,f} - F_{f}||_{\infty} \right).
	\end{equation}
	
	\bigskip
	
	Thus,
	
	\begin{align} \label{supremonodependedex}
	& E \left[ \sup_{x \in \mathbb{E}} \Big| ELD_{S}^{\beta(k)}\left(f(x),P_{f,n}\right) - LD_{S}^{\beta}\left(f(x),P_f\right) \Big| \right] \leq
	E \left[  \frac{1}{\beta^2} \left(\frac{17}{n} + 8 ||F_{f,n} - F_{f}||_{\infty} \right) \right] =  \nonumber \\
	& = \frac{1}{\beta^2} \left( \frac{17}{n} + 8 E \left[ ||F_{f,n} - F_{f}||_{\infty} \right] \right).
	\end{align}

	From  (\ref{supremonodependedex}) it is enough to show that
	$ E \left[ ||F_{f,n} - F_{f}||_{\infty} \right]  \xrightarrow[n \to +\infty]{}0.$
	
	Let $\epsilon > 0,$ there exists  $n_0 \in \mathbb{N}$ such that $ \displaystyle \forall \ n \geq n_0,$ $4 e^{-n \frac{ \epsilon^2}{2} } < \frac{\epsilon}{2};$
	
	\begin{align*}
	& E \left[ ||F_{f,n} - F_{f}||_{\infty} \right] = \\
	& = E \left[ ||F_{f,n} - F_{f}||_{\infty} \mathcal{I} \Big\{ ||F_{f,n} - F_{f}||_{\infty} \leq \frac{\epsilon}{2} \Big\} \right] +
	E \left[ ||F_{f,n} - F_{f}||_{\infty} \mathcal{I} \Big\{ ||F_{f,n} - F_{f}||_{\infty} > \frac{\epsilon}{2} \Big\} \right] \leq \\
	& \leq \epsilon \mathbb{P} \left( ||F_{f,n} - F_{f}||_{\infty} \leq \frac{\epsilon}{2} \right) +
	2 \mathbb{P} \left( ||F_{f,n} - F_{f}||_{\infty} > \frac{\epsilon}{2} \right) \leq \\
	& \leq \frac{\epsilon}{2} + 2 \mathbb{P} \left( ||F_{f,n} - F_{f}||_{\infty} > \frac{\epsilon}{2} \right) .
	\end{align*}

	By Dvoretzky-Kiefer-Wolfowitz inequality Massart \cite{M90} we have that
	
	\begin{equation} \label{DKV}
	E \left[ ||F_{f,n} - F_{f}||_{\infty} \right] \leq \frac{\epsilon}{2} + 4 e^{-n \frac{\epsilon^2}{2}} < \epsilon
	\end{equation}
	
	It is important to note that the convergence does not depend on the functional $f;$ which will be useful to prove part (b) of the result.

	\bigskip
	
	\textit{(b)} It follows straight forward from  part (a) of the theorem and the fact that it is the integral of a mensurable, positive and bounded  function. Hence, given $\epsilon >0$ there exists $n_1 \in \mathbb{N}$ such that for every $n>n_1,$
	
	$$
	E \left[ \sup_{x \in \mathbb{E}} \left| ELD_S^{\beta(k)}(f(x),P_{f,n})-LD_S^{\beta}(f(x),P_{f}) \right| \right] < \epsilon.
	$$
	
	Then,
	
	\begin{align}
	& E \left[ \sup_{x \in \mathbb{E}} \Big| EIDLD^{\beta(k)}(x,P_n) - IDLD^{\beta}(x,P) \Big| \right] \leq  \nonumber  \\
	&  \leq E \left[ \sup_{x \in \mathbb{E}} \int  \Big|ELD_{S}^{\beta(k)}(f(x),P_{f,n}) - LD_{S}^{\beta}(f(x),P_f)\Big| dQ(f) \right] = \nonumber  \\
	&  = E \left[ \int \sup_{x \in \mathbb{E}} \Big|ELD_{S}^{\beta(k)}(f(x),P_{f,n}) - LD_{S}^{\beta}(f(x),P_f)\Big| dQ(f) \right] = \nonumber  \\
	&  = \int E \left[ \sup_{x \in \mathbb{E}} \Big|ELD_{S}^{\beta(k)}(f(x),P_{f,n}) - LD_{S}^{\beta}(f(x),P_f)\Big| \right] dQ(f) \leq \nonumber  \\
	&  = \int  \epsilon \ dQ(f) =  \epsilon \ \mbox{ if } \ n > n_1. \nonumber
	\end{align}
	
\end{proof}

\begin{proof}[Proof:Theorem 2]

	Note that,
	\begin{align}
	& \mathbb{P} \left( \sup_{x \in \mathbb{E}} \Big| EIDLD^{\beta}(x,P_n) - IDLD^{\beta}(x,P) \Big|  \xrightarrow[n \to +\infty]{} 0 \right) =   \nonumber \\
	& = \mathbb{P} \left( \bigcap_{ \epsilon > 0} \bigcup_{n \in \mathbb{N} } \bigcap_{l \geq n} \left\{ \sup_{x \in \mathbb{E}} \Big| EIDLD^{\beta(k)}(x,P_l) - IDLD^{\beta}(x,P) \Big| < \epsilon \right\} \right) =  \nonumber  \\
	& = 1 - \mathbb{P} \left( \bigcup_{ \epsilon > 0} \bigcap_{n \in \mathbb{N} } \bigcup_{l \geq n} \left\{ \sup_{x \in \mathbb{E}} \Big| EIDLD^{\beta(k)}(x,P_l) - IDLD^{\beta}(x,P) \Big| > \epsilon \right\} \right). \nonumber
	\end{align}
	
	It is enough to show that
	$$ \mathbb{P} \left( \bigcup_{ \epsilon > 0} \bigcap_{n \in \mathbb{N} } \bigcup_{l \geq n} \left\{ \sup_{x \in \mathbb{E}} | EIDLD^{\beta(k)}(x,P_l) - IDLD^{\beta}(x,P) | > \epsilon \right\} \right)=0.$$
	
	By Borell-Cantelli lemma it is enough to prove that if the probability of the sets
	$$ A_n = \left\{ \sup_{x \in \mathbb{E}} |EIDLD^{\beta(k)}(x,P_n) - IDLD^{\beta}(x,P) | > \epsilon \right\},$$
	are summable, then, for all $\epsilon >0, \\
	\mathbb{P} \left(  \bigcap_{n \in \mathbb{N} } \bigcup_{l \geq n} \left\{ \sup_{x \in \mathbb{E}} \Big| EIDLD^{\beta(k)}(x,P_l) - IDLD^{\beta}(x,P) \Big| > \epsilon \right\} \right)=0 $ and the prove would be done.
	
	\bigskip
	
	Let $\epsilon > 0,$
	\begin{align*}
	& \sup_{x \in \mathbb{E}} \Big| EIDLD^{\beta(k)}(x,P_n) - IDLD^{\beta}(x,P) \Big| = \sup_{x \in \mathbb{E}} \Big| \int ELD^{\beta(k)}_S (f(x),P_{n,f}) - LD^{\beta}_S (f(x),P_f) dQ \Big| \leq \\
	& \leq \sup_{x \in \mathbb{E}} \int \Big| ELD^{\beta(k)}_S (f(x),P_{n,f}) - LD^{\beta}_S (f(x),P_f) \Big| dQ = \\
	& = \int \sup_{x \in \mathbb{E}} \Big| ELD^{\beta(k)}_S (f(x),P_{n,f}) - LD^{\beta}_S (f(x),P_f) \Big| dQ \leq \\
	& \leq \int \frac{1}{\beta^2} \left(\frac{8}{n} + 4 ||F_{n,f} - F_{f}||_{\infty} \right) dQ \
	= \frac{1}{\beta^2}\frac{8}{n} +  \frac{1}{\beta^2} 4 \int ||F_{n,f} - F_{f}||_{\infty} dQ \leq \\
	& \leq \frac{8}{n \beta^2} + \frac{1}{2 \beta^2} \sup_{f \in \mathbb{E}^{'}} \|F_{n,f} - F_{f}\|_{\infty}.
	\end{align*}
	
	Given that $ \displaystyle \sup_{f \in \mathbb{E}'} ||F_{n,f} - F_{f}||_{\infty} < + \infty $ there exists  $f_0 \in \mathbb{E}'$ such that
	$$\sup_{f \in \mathbb{E}'} ||F_{n,f} - F_{f}||_{\infty} \leq ||F_{n,f_0} - F_{f_0}||_{\infty} + \beta^{2} \epsilon, $$
	then
	$$ \frac{8}{n \beta^2} + \frac{1}{2 \beta^2} \sup_{f \in \mathbb{E}^{'}} \|F_{n,f} - F_{f}\|_{\infty} \leq
	\frac{8}{n \beta^2} + \frac{1}{2 \beta^2} ||F_{n,f_0} - F_{f_0}||_{\infty} + \frac{\beta^2 \epsilon}{2 \beta^2}.$$

	By  Dvoretzky-Kiefer-Wolfowitz inequality,
	
	\begin{align*}
	& \mathbb{P}(A_n) \leq \mathbb{P} \left( \frac{8}{n \beta^2} + \frac{1}{2 \beta^2} ||F_{n,f_0} - F_{f_0}||_{\infty} + \frac{\epsilon}{2} > \epsilon \right) =
	\mathbb{P} \left( ||F_{n,f_0} - F_{f_0}||_{\infty} > \epsilon \beta^2 - \frac{16}{n} \right) \leq \\
	& \leq 2 \exp \left \{ -2n \left( \epsilon \beta^2 - \frac{16}{n} \right)^2 \right \}.
	\end{align*}
	
	Which is bounded by Borell-Cantelli's lemma,
	\begin{align}
	\sum_{n \in \mathbb{N}} \mathbb{P}(A_n) \leq 2 \sum_{n \in \mathbb{N}} \exp \left \{ -2n \left( \epsilon \beta^2 - \frac{16}{n} \right)^2 \right \} < + \infty.
	\end{align}
	
\end{proof}

\subsection{Computational Cost}
It is well known that one of the main drawback of depth and local depth measures is that they are highly demandant computationally. Therefore, we analyze this problem from this perspective, focusing in the multivariate case where these problems usually arise.
In what follows we compare the computational times for the three local depths measures, LDS (Agostinelli and Romanazzi, 2011), LDPV (Paindavaine and Van Bever, 2013) and IDLD, our proposal.
We generate data that has a three group structure each is generated according to a  multivariate $p$-dimensional normal distribution, $N(\mu_i, \Sigma),$  for $i=1,2,3,$ with centers $(-3,-3,0,\dots,0),(0,0,0,\dots,0),(3,3,0,\dots,0), $ and the covariance matrix is the identity matrix, the first two variables are informative while the remaining, $p-2,$ ($p=5,35,65$) have normal independent noise variables centered at the origin with unit standard deviation.
Also we considered different sample sizes, $n=300, 2100, 3900$ and $5700.$ For ILDL $50,$ random directions were generated.
Since the computational time increases exponentially as the dimension increases and the benchmark procedures are expesive computationally, we only performed  $M=50$ replicates under each scenario.

\begin{table}[!t]
\centering
	\label{compmultiv}\par
	
		\begin{tabular}{r|crrrr}
			&       & $n=300$  & $n=2100$ & $n=3900$ & $n=5700$   \\[3pt] \hline
			$p=5$ &  LDS  & $0.785$ & $38.27$ & $131.65$ & $280.66$ \\
			&  LDPV & $4.236$ & $100.08$ & $292.67$ & $624.91$ \\
			&  IDLD & $0.397$ & $20.74$ & $73.74$ & $160.43 $  \\ \hline
			$p=35$&  LDS  & $1.770$ & $86.88$ & $299.03$ & $638.38$ \\
			&  LDPV & $7.840$ & $200.94$ & $629.07$ & $1363.97$ \\
			&  IDLD & $0.402$ & $20.68$ & $74.41$ & $160.29 $ \\ \hline
			$p=65$&  LDS  & $3.788$ & $184.92$ & $641.01$ &  $1368.31$  \\
			&  LDPV & $10.934$ & $288.79$ & $982.79$ & $2094.89$ \\
			&  IDLD & $0.406$ & $20.66$ & $75.07$ & $164.40$ \\ \hline
	\end{tabular}
	\caption{Mean computer time for LDS, LDPV and IDLD.}
\end{table}

From Table \ref{compmultiv} we can see that in every case IDLD is the fastest procedure, moreover it is not affected by the dimension of the dataset, while the computational efforts required by LDS and LDPV grow dramatically as  $p$ increases. LDPV is overall the slowest procedure. Even though all the procedures demand more time as the sample size grows, IDLD is the one with the least pronounced growth rate. In every case the mean square error between the estimates is smaller than $0.09,$ being the three proposals able to detect the multimodal structure.


\section*{Acknowledgements}
This work was  partially  supported by the Spanish Agencia Estatal de Investigaci\'on (AEI) and Fondo Europeo de
Desarrollo Regional (FEDER), Grant CTM2016-79741-R for MICROAIPOLAR project and by Grant \textsc{pict} 2018-00740 from \textsc{anpcyt} (M. Svarc). We appreciate the valuable comments done by Mat\'{i}as Fernandez Piana, in the GDP data set analysis.


\end{document}